\newcommand{\daniel}[1]{\textcolor{red}{$<$}\scriptsize\textbf{#1}\textcolor{rée}{$>$}\normalsize}
\newcommand{\owen}[1]{\textcolor{green}{$<$}\scriptsize\textbf{#1}\textcolor{green}{$>$}\normalsize}
\newcommand{\temporary}[1]{#1}
\renewcommand{\daniel}{\cut}
\renewcommand{\owen}{\cut}
\renewcommand{\temporary}{\cut}
\newcommand{\notessential}[1]{#1}
\newcommand{\altnotessential}[1]{#1}   %substitute text, if required when notessential is done
\renewcommand{\notessential}{\cut}
\newcommand{\omitproof}[1]{}
\newcommand{\cut}[1]{}
\newcommand{\lxor}{\oplus}
\newtheorem{proposition}{Proposition}  %in llncs
\def\qed{\relax\ifmmode\hskip2em \Box\else\unskip\nobreak\hskip1em $\Box$\fi}
\begin{document}

%
% --- Author Metadata here ---
\conferenceinfo{DOLAP'08,} {October 30, 2008, Napa Valley, California, USA.}
\CopyrightYear{2008}
 \crdata{978-1-60558-250-4/08/10}
% --- End of Author Metadata ---

\title{Histogram-Aware Sorting for Enhanced Word-Aligned Compression in Bitmap Indexes}
%
% You need the command \numberofauthors to handle the 'placement
% and alignment' of the authors beneath the title.
%
% For aesthetic reasons, we recommend 'three authors at a time'
% i.e. three 'name/affiliation blocks' be placed beneath the title.
%
% NOTE: You are NOT restricted in how many 'rows' of
% "name/affiliations" may appear. We just ask that you restrict
% the number of 'columns' to three.
%
% Because of the available 'opening page real-estate'
% we ask you to refrain from putting more than six authors
% (two rows with three columns) beneath the article title.
% More than six makes the first-page appear very cluttered indeed.
%
% Use the \alignauthor commands to handle the names
% and affiliations for an 'aesthetic maximum' of six authors.
% Add names, affiliations, addresses for
% the seventh etc. author(s) as the argument for the
% \additionalauthors command.
% These 'additional authors' will be output/set for you
% without further effort on your part as the last section in
% the body of your article BEFORE References or any Appendices.

\numberofauthors{3} %  in this sample file, there are a *total*
% of EIGHT authors. SIX appear on the 'first-page' (for formatting
% reasons) and the remaining two appear in the \additionalauthors section.
%
\author{%
% You can go ahead and credit any number of authors here,
% e.g. one 'row of three' or two rows (consisting of one row of three
% and a second row of one, two or three).
%
% The command \alignauthor (no curly braces needed) should
% precede each author name, affiliation/snail-mail address and
% e-mail address. Additionally, tag each line of
% affiliation/address with \affaddr, and tag the
% e-mail address with \email.
%
% 1st. author
\alignauthor
Owen Kaser\\
       \affaddr{Dept. of CSAS}\\
       \affaddr{University of New Brunswick}\\
      \affaddr{100 Tucker Park Road}\\
       \affaddr{Saint John, NB, Canada }\\
       \email{o.kaser@computer.org}
\alignauthor
Daniel Lemire\\
\affaddr{LICEF,  Universit\'e du Qu\'ebec \`a Montr\'eal}\\
      \affaddr{100 Sherbrooke West}\\
        \affaddr{Montreal, QC, Canada }\\
        \email{lemire@acm.org}
\alignauthor
Kamel Aouiche\\
\affaddr{LICEF, Universit\'e du Qu\'ebec \`a Montr\'eal}\\
      \affaddr{100 Sherbrooke West}\\
        \affaddr{Montreal, QC, Canada }\\
        \email{kamel.aouiche@gmail.com}
% Daniel Lemire, Kamel Aouiche \\
%        \affaddr{LICEF, Universit{\afacc\'e} du  Qu{\afacc\'e}bec {\afacc\`a} Montr{\afacc\'e}al}\\
%      \affaddr{100 Sherbrooke West}\\
%        \affaddr{Montreal, QC, Canada }\\
%        \email{lemire@acm.org, kamel.aouiche@gmail.com}
% }
}
% Just remember to make sure that the TOTAL number of authors
% is the number that will appear on the first page PLUS the
% number that will appear in the \additionalauthors section.

\maketitle
\begin{abstract}
Bitmap indexes must be compressed to reduce input/output costs and minimize CPU usage.
To accelerate logical operations (AND, OR, XOR) over bitmaps, we use techniques
based  on run-length encoding (RLE), such
as Word-Aligned Hybrid (WAH) compression.
These techniques are sensitive to the order of the rows: a simple lexicographical
sort can divide the index size by 9 and make indexes several times faster.
We investigate reordering heuristics based on computed attribute-value histograms.
Simply permuting the columns of the table
% "of the table" because I want to distinguish with what Pinar does... which is permuting
% the bitmaps... a different point......
 based on these histograms can increase 
 the  sorting efficiency by 40\%.
%Histogram-aware bitmap interleaving is even more effective. 
%We find that histogram-aware Gray-code sorting can be slightly better than 
%lexicographical sort.   
%%%%%%% probably not essential:
%For a large table (800~million rows and 20\,GB), sorting increases the index-construction
%time by 50\%. % I'll have more precise numbers to put in the paper.
\end{abstract}

% A category with the (minimum) three required fields
\category{H.3.2}{Information Storage and Retrieval}{Information Storage}
%A category including the fourth, optional field follows...
% PROBABILITY AND STATISTICS
\category{E.1}{Data}{Data Structures}

\terms{Algorithms, Performance, Experimentation.}

\cut{ for 2008, it is keywords are optional in the document
\keywords{Multidimensional Databases, Indexing, Compression. }}

\section{Introduction}

Bitmap indexes are among the most commonly used indexes in data warehouses~\cite{davis2007idw,bellatreche2007sap}.
Without compression, bitmap indexes can be impractically large and slow.
Word-Aligned Hybrid (WAH)~\cite{wu2006obi} is a competitive compression technique:
compared to LZ77~\cite{chan1998bid} and Byte-Aligned Bitmap Compression (BBC)~\cite{874730},
WAH indexes can be ten times faster~\cite{502689}.

Run-length encoding (RLE) and similar encoding schemes (BBC and WAH) make it possible to compute
logical operations between bit\-maps in \cut{a}%
 time proportional to the compressed size of
the bitmaps.
However, their efficiency depends on the order of the rows.  
While computing the best ordering is NP-hard~\cite{bda08}, 
simple heuristics such as lexicographical sort are effective.

Pinar et al.~\cite{pinar05}, Sharma and Goyal~\cite{sharma2008emc}, and Canahuate et al.~\cite{pinarunpublished} \cut{ used shorter word to fix
a line break problem: applied} used Gray-code
row sorting to improve RLE and WAH compression.
However, their largest bitmap index %has 2~million rows and 900~bitmaps: it
 could
fit uncompressed in RAM on a PC\@. 
%
%We found that the
%Gray-code sorting tested by  Pinar et al.~\cite{pinar05} was not superior to a simple
%lexicographical sort.
\notessential{%
We found that partially sorting the table---by blocks---was inefficient and could even
increase the indexing time.}%

We distinguish two types of heuristics for this problem. Heuristics such as
lexicographical sort~\cite{bda08} or Gray-code sorting~\cite{pinar05} are histogram-oblivious.
They ignore the number of attribute values and their frequencies.
Other heuristics are histogram-aware. They include column reorganizations and frequency-aware
ordering. 
\owen{moved to avoid ``used before defined'' on term ``histo obliv''}
On larger data sets~\cite{bda08}, we had considered  histogram-oblivious 
row-ordering heuristics. Sorting before indexing 
reduced the total construction time.
Our main contribution is an evaluation of
practical histogram-aware heu\-ris\-tics to the row ordering problem.
Secondary contributions include guidelines about when
%\emph{unary}
%\kamel{We did not explain what do we mean by unary bitmap.}
``unary''
%\owen{maybe quotes is a better way to indicate that it is not necessarily known}
bitmap  encoding is preferred, and an improvement over
the naive bitmap construction algorithm---it is now practical to construct
bitmap indexes over tables with hundreds of millions of rows and millions of attribute values.

To further reduce the size of bitmap indexes, we can bin the
attribute values~\cite{354819,1155030,stockinger2004esb,1183529}. For range queries,
different bitmap encodings have different space-performance tradeoffs~\cite{chan1998bid,chan1999ebe}. 
\notessential{%
O'Neil et al.~\cite{oneil2007bid} have shown experimentally that compression and
vertical partitioning of the bitmaps are especially important when indexing read-only data.}

\temporary{%
\subsection{Notation}
This is for our sanity and should be removed prior to submission.
\begin{tabular}{|rl|} \hline
$r_i$ & ith row\\
$k$ & k (coding)\\
$N$ & num bitmaps in an encoding\\
$L_i$ & num bitmaps for encoding dimension $i$\\
$L$ & total bitmaps\\
$c$\cut{, $d$} & num columns/dimensions\\
$n$ & num rows\\
$r$ & parameter representing numbe of 1 bits in column\\
$n_i$ & distinct values for attribute $i$\\ 
%N = \sum_i n_i$ & total attribute values (conflict) [try $M$?]\\
$\delta$ & total number of dirty words in a column\\
$f(a_i)$ & frequency of value $a_i$\\ 
%f_i$ & cardinality of dim $i$ \\
%D_i$ & ith dimension; $n_i$ its size \\
%f_z^{(i)}$ & frequency of $z$th value in  dim $i$\\
\hline
\end{tabular}
eventual consistency checks:
terminology: attribute vs dimension vs column
}

%\subsection{Related Work}\label{sec:RealtedWork}

\section{Bitmap Indexes}\label{sec:bitmapIndexes}

We find bitmap indexes in several database systems, apparently beginning
with the MODEL~204 engine, commercialized for the IBM~370 in 1972.
% The first known example of a bitmap index in a database engine (MODEL~204) was commercialized
% for the IBM~370 in 1972~\cite{658338}. 
\cut{
\owen{not sure following comment fits well; maybe we can omit it.}
One of the benefits of bitmap indexes is to favor sequential disk access~\cite{jurgens2001tbi}.
%% we now cite Chan in the related work section.
\cut{Specialized bitmap indexes~\cite{chan1999ebe} can support range queries, although the current
paper does not. }%
%need some terminology. r, c, L, n_i
}%

\notessential{%
Bitmap indexes are widely applicable. 
Whereas it is commonly reported~\cite{hammer2003cea} that bitmap indexes
are suited to small dimensions such as gender or marital status,  they
also work well over large dimensions~\cite{oraclevivekbitmap,wu2006obi,bda08}. 
And as the number of dimensions increases, bitmap indexes become competitive
against  specialized multidimensional index structures such as R-trees~\cite{671192}.
}

The simplest and most common method of bitmap indexing associates a bitmap with every
attribute value $v$ of every attribute $a$;
the bitmap represents the predicate $a=v$. 
 For a table with $n$~rows (facts) and $c$~columns (attributes/dimensions), 
each bitmap has length $n$. 
Initially, all bitmap values are set to 0.
For row $j$, we set the $j^{\textrm{th}}$~component of $c$~bitmaps to 1.
 If the $i^{\textrm{th}}$~attribute has $n_i$ possible values, we have
 $L = \sum_{i=1}^{c} n_i$~bitmaps.  

Bitmap indexes
are fast, because we find rows having a given value $v$ for attribute $a$ by
reading only the bitmap
corresponding to value $v$ (and not the other bitmaps for attribute $a$), and there is
only one bit (or less, with compression) to process for each row.
 More complex queries are achieved
with logical operations (AND, OR, XOR, NOT) over bitmaps and
current microprocessor can perform 32 or 64 bitwise operations 
in a single machine instruction.

For row $j$, exactly one bitmap per column will have its  $j^{\textrm{th}}$ entry set to 1. Although the entire index 
has $nL$ bits, there are only $nc$~1's; for many tables, $L \gg c$ and thus on average
the table is very sparse.  
Long (hence compressible) 
runs of 0's are expected.

One can also reduce the number of bitmaps for large dimensions. 
Given
$L$~bitmaps, there are $L(L-1)/2$~\emph{pairs} of bitmaps.   So, instead of 
mapping an attribute value to a single bitmap, we map them to pairs of bitmaps  (see Table~\ref{tab:examples1ofk}). 
We refer to this technique 
as 2-of-$N$ encoding~\cite{wong1985btf};
% Wong et al. probably did not invent 2-of-N encoding, but they do not cite a source
% and this is where I got the idea from. I guess.
 with it, we can use
far fewer bitmaps for large dimensions.  For instance, with only 2,000 bitmaps,
we can represent an attribute with 2~million distinct values.
%\owen{The idea of codes and allocating codes for values is not done strongly
%enough here.   We want to focus on rows, rather than columns.  So we either 
%be explicit in the discussion, or give a picture as in BDA.}
%\daniel{Ok. To be fixed.}
But the average
bitmap density is much higher with 2-of-$N$ encoding, and thus compression
may be less effective.
More generally, $k$-of-$N$ encoding allows $L$~bitmaps to represent $L \choose k$
distinct values; conversely, using $L=\lceil k n_i^{1/k}\rceil$~bitmaps is sufficient to
 represent $n_i$~distinct values.
 However, searching for a specified value $v$ no longer
requires scanning a single bitmap.  Instead, the corresponding $k$~bitmaps must
be combined with a bitwise AND\@.  
There is a tradeoff between index size and the
index speed~\cite{bda08}.

For small dimensions, %\owen{shifts emphasis to size}
using $k$-of-$N$ encoding may fail to reduce the number
of bitmaps, but still reduce the performance. \cut{For this reason}%Hence, we apply
We apply  
the following heuristic. Any column with less than 5~distinct values is limited
to 1-of-$N$ encoding (simple or unary  bitmap). Any column with less than 21~distinct values,
is limited to $k=1,2$, and any column with less than 85~distinct values is limited
to $k=1,2,3$.

\begin{table}
%%%%%%%%%%%% TABLE captions go above the tables at DOLAP! And pretty much always (ACM, IEEE)
\caption{\label{tab:examples1ofk}Example of 1-of-N and 2-of-N encoding}%Comparison between 1-of-$N$ and 2-of-$N$ encoding}
\centering
\begin{tabular}{lcc}
Montreal  & 100000000000000  & 110000 \\
Paris     & 010000000000000  & 101000 \\
Toronto   & 001000000000000  & 100100 \\
New York  & 000100000000000  & 011000\\
Berlin    & 000010000000000  & 010100
\end{tabular}
%}
\begin{comment}
\subtable[values]{%
\begin{tabular}{l}
Montreal \\ % 100000000000000 \\
Paris  \\ % 010000000000000  \\
Toronto  \\ % 001000000000000  \\
New York  \\ % 000100000000000 \\
Berlin  \\ % 000010000000000 \\
%Montreal 100000000000000 \\
%000001000000000  \\
\end{tabular}
}
\subtable[1-of-$N$]{%
\begin{tabular}{c}
100000000000000 \\
010000000000000  \\
001000000000000  \\
000100000000000 \\
000010000000000 \\
%100000000000000 \\
%000001000000000  \\
\end{tabular}
}
\subtable[2-of-$N$]{%
\begin{tabular}{c}
110000 \\
101000\\
100100 \\
011000\\
010100\\
%110000\\
%001100 \\
\end{tabular}
}
\end{comment}
\end{table}

\section{Compression}\label{sec:compression}
RLE % owen: word reduction is an efficient compression technique 
compresses efficiently
when there are long runs of
identical values: it works by replacing any repetition by the number
of repetitions followed by the value being repeated. For example, 
the sequence  11110000 becomes 4140. Current microprocessors
perform operations over words of  32 or 64 bits and not individual bits. Hence,
the CPU cost of RLE might be large~\cite{stockinger2002spa}.
By trading some compression for more speed, Antoshenkov~\cite{874730} defined a RLE variant working over bytes
instead of bits: the Byte-Aligned Bitmap Compression (BBC).
Trading even more compression for even more speed,
Wu et al.~\cite{wu2006obi} proposed the Word-Aligned Hybrid (WAH).
Their scheme is made of two different types of words\footnote{For
simplicity, we limit our exposition to 32~bit words.}. 
The first bit of every word 
 distinguishes a verbatim (or \emph{dirty}) 31-bit word
from a running sequence of 31-bit \emph{clean} words (0x00 or 1x11). Running sequences are  stored using 1~bit to distinguish between the type
of word (0 for 0x00 and 1 for 1x11) and 30~bits to represent the number of consecutive
clean words. 
Hence, a bitmap of length 62 containing a single 1-bit at position $32$ would be coded as
the words 100x01 and 010x00.  
Because dirty words are stored in units of 31~bits using 32~bits, 
WAH compression can expand the data by 3\%.
We created our own WAH variant
called Enhanced Word-Aligned Hybrid (EWAH).
Contrary to WAH compression, EWAH may never (within 0.1\%) generate a compressed
bitmap larger than the uncompressed bitmap.
It also uses only two types of words (see Fig.~\ref{EWAHfig}). The first type is a 32-bit verbatim word. The second type of word is a
marker word: the first bit is used to indicate which clean word will follow,
16~bits to store the number of clean words, and 15~bits to store the number of dirty words following the clean words.
%\daniel{Daniel was planning to fix this stupid problem before DOLAP, but rerunning all
%tests seems like a bad idea. Better wait post-DOLAP to fix this:} 
\notessential{%
Because EWAH uses only 16~bits to store the number of clean words, it may be less efficient
than WAH when there are many consecutive sequences of  $2^{16}$~identical clean words.}%
EWAH  bitmaps begin with a marker word.

\begin{figure}
\centering
{%
	\includegraphics[width=0.85\columnwidth]{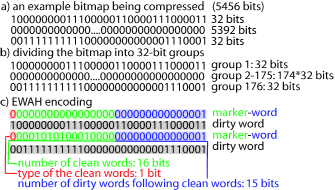}
}
\caption{\label{EWAHfig}Enhanced Word-Aligned Hybrid (EWAH)}
\end{figure}

Given $L$~bitmaps and $n$~rows, we can naively construct a bitmap index in
time $O(n L)$ by appending a word to each compressed bitmap  every 
32 or 64~rows.
We found this approach impractically slow when $L$ was large---typically, with
$k=1$.
Instead, we construct bitmap indexes in time $O(nck+L)=O(nck)$~\cite{bda08}
 where $ck$ is the number
of true values per row (See Algorithm~\ref{algo:owengenbitmap}): within each block of 32~rows, we store the values of the bitmaps
in a set---omitting any unsolicited bitmap, whose values are all false (0x00). 
%In the worst case, 
%the set has size $32ck$ and there are $n/32$ blocks. 
%Each bitmap can be updated in constant
%time because it keeps a record of how many rows have been added, and any missing rows between
%the last update and the current update can be assumed to be clean (0x00).
%\owen{we could pare down the preceding discussion, having presented the
%algorithm more explicitly}
We partition the table horizontally into blocks indexed with  compressed bitmaps
using 
a fixed memory
budget (256\,MiB). Each block of bitmaps is written sequentially and preceded
by an array of 4-byte integers containing the location of each bitmap.

\begin{algorithm}[tb]
\small
\begin{algorithmic}
\STATE Construct: $B_1,\ldots, B_L$, $L$ compressed bitmaps 
\STATE $\omega_i \leftarrow 0$ for $1 \leq i \leq L$.
\STATE $c\leftarrow 1$ \COMMENT{row counter}
\STATE $\mathcal{N} \leftarrow \emptyset$ \COMMENT{$\mathcal{N}$ records the dirtied bitmaps}
\FOR{each table row}
\FOR{each attribute in the row}
\FOR {each bitmap  $i$ corresponding to the attribute value}
\STATE set to true the $(c \bmod w)^{\textrm{th}}$~bit of word $\omega_i$
\STATE $\mathcal{N} \leftarrow \mathcal{N} \cup \{i\}$
\ENDFOR
\ENDFOR
\IF{$c$ is a multiple of $w$}
\FOR{$i$ in $\mathcal{N}$}
\STATE add $c/w-\vert B_i \vert  - 1$~clean words  (0x00) to $B_i$
\STATE add the word $\omega_i$ to bitmap $B_i$
\STATE $\omega_i \leftarrow 0$
\ENDFOR
\STATE $\mathcal{N} \leftarrow \emptyset$
\ENDIF
\STATE  $c\leftarrow c+1$
\ENDFOR
\FOR{$i$ in \{1,2,\ldots,L\}}
\STATE add $c/w-\vert B_i \vert  - 1$~clean words (0x00) to $B_i$ 
\ENDFOR
\end{algorithmic}
\caption{\label{algo:owengenbitmap}
Constructing bitmaps. For simplicity, we assume the number of
rows is multiple of the word size.}
\end{algorithm}

%%%%%%%%%%%%%%%%%%%%%%%%%%%%%%%%%%%%%%%%%%%%%%%%%%%%%%%%%%%%%%%%%%%%%
% Daniel conjectures that a stronger result is true.
% We can do the computation in time O( \sum_i \vert B_i \vert)
% but that's for another paper.
%%%%%%%%%%%%%%%%%%%%%%%%%%%%%%%%%%%%%%%%%%%%%%%%%%%%%%%%%%%%%%%%%%%%%
Naively, we could compute logical operations between 2~bitmaps in n/32~bitwise
operations. Instead,  we compute
logical operations (OR, AND, XOR) between 2~bitmaps in time
$O( \vert B_1 \vert+\vert B_2 \vert)$ where $\vert B_i \vert$ is the size
of the compressed bitmap~\cite{wu2006obi,bda08}. Finally, we can bound the bitmap sizes:
 %$\vert \land_i B_i  \vert \leq \min_i \vert B_i\vert$
$\vert \bigwedge_i B_i  \vert \leq \min_i \vert B_i\vert$
and  %$\vert \lor_i B_i \vert \leq \sum_i\vert B_i\vert$.
$\vert \bigvee_i B_i \vert \leq \sum_i\vert B_i\vert$.

\section{Sorting to Improve Compression}
%\daniel{do we capitalize subsection headers? }

Sorting can benefit bitmap indexes at several levels.
We can sort the rows of the table.
The sorting order depends itself on the order of the table columns.
And finally, we can allocate the bitmaps to the attribute values in sorted order.

\subsection{Sorting rows}
\label{sec:sorting-rows}

Reordering the rows of a compressed bitmap index can improve
compression. Whether \cut{you use}%
 using RLE, BBC, WAH or EWAH,
the problem is NP-hard by reduction from the Hamiltonian path
problem~\cite[Theorems 1 and 2]{bda08}.
A simple heuristic begins with an uncompressed
index.  Rows (binary vectors)
are then rearranged to promote runs.  In the process, we may also
reorder the bitmaps.
% Daniel: as far as I can tell, they did not rearrange the columns of the
% table... what they called "columns" was what we called bitmaps...
% Here is what they wrote " a reordering method was proposed to reorder 
% rows to align nonzeros of the matrix to consecutive positions in columns"
%
This is the approach of Canahuate et al.~\cite{pinarunpublished},
%owen replaced the word "run", which has been too recently used as
%a noun...
%but it runs in time $\Omega(n L)$.
but it uses $\Omega(n L)$ time.
For the large dimensions and number of rows we have considered,
it is infeasible. 
A more practical approach~\cite{bda08} is to reorder the  table,
then construct the compressed index directly; we can also reorder the 
table columns prior to sorting.

Three types of ordering can be used for ordering rows.
We may cluster identical rows, but it is not a competitive heuristic~\cite{bda08}.
\begin{itemize}
\item  In lexicographic order, a sequence
$a_1, a_2, \ldots$ is smaller than another sequence $b_1, b_2, \ldots$ if
and only if there is a $j$ such that $a_j < b_j$ and $a_i = b_i$ for $i<j$.
The Unix \texttt{sort} command provides an efficient
mean of sorting flat files into lexicographic order; in under 10\,s
our test computer (see Section~\ref{sec:Experiment})  sorted a
5-million-line,
120\,MB file. SQL supports lexicographic sort %with the ORDER BY instruction.
via ORDER BY. %owen: not sure it is an "instruction"
\cut{\item We may cluster runs of identical  rows. This problem can be solved
with hashing algorithms, by multiset discrimination algorithms~\cite{cai1995umd}, or by a
lexicographic sort. It is not a competitive heuristic~\cite{bda08}.}
\item Gray-code (GC) sorting is defined over bit vectors~\cite{pinar05}:
the sequence $a_1,a_2, \ldots$ is smaller than  
$b_1,b_2,\ldots $ if and only if there exists $j$ such that\footnote{The symbol $\lxor$ is the XOR operator.} 
$a_j = a_1 \lxor a_2 \lxor \ldots \lxor a_{j-1}$,
$b_j \not = a_j$, and $a_i=b_i$ for $i<j$. 
Algorithm~\ref{algo:graycomp} shows how to compare sparse GC bit vectors
$v_1$ and $v_2$ in time $O(\min(\vert v_1 \vert,\vert v_2 \vert)$ where
$ \vert v_i\vert$ is the number of true value in bit vector $v_i$.
Sorting the rows of a bitmap index without materializing the uncompressed
bitmap index is possible~\cite{bda08}: we 
%implemented a solution in time $O(n c k \log n)$ : boasting about our coding speed?
implemented an $O(n c k \log n)$-time solution
for $k$-of-$N$ indexes using
an external-memory B-tree~\cite{qdbm}. \cut{As values, we used the rows of the table, and as keys, we used
the position of the ones in the bitmap row. Both were compressed using LZ77 to minimize IO costs.}%
 Unfortunately, it proved to be two orders of
magnitude slower than lexicographic sort.
\end{itemize}

\begin{algorithm}
\begin{algorithmic}
\STATE \textbf{INPUT}: arrays $a$ and $b$ representing the position of the ones in two bit vectors
\STATE \textbf{OUTPUT}: whether the bit vector represented by $a$ is less than the one represented by $b$
\STATE $f\leftarrow \texttt{true}$
\STATE $m \leftarrow \min(\textrm{length}(a), \textrm{length}(b))$
\FOR{$p$ in $1,2,\ldots,m$}
\STATE return $ f$ if $a_p > b_p$ and $ \lnot f$ if $a_p < b_p$
 \STATE $f\leftarrow \lnot f$
 \ENDFOR
 \STATE return $\lnot f$ if $\textrm{length}(a) > \textrm{length}(b)$,
 $f$ if $\textrm{length}(b)>\textrm{length}(a)$, and  \texttt{false} otherwise
\end{algorithmic}
\caption{\label{algo:graycomp}Gray-code less comparator between  sparse bit vectors}
\end{algorithm}

For RLE, the best ordering of the rows
of a bitmap index %is the one minimizing 
minimizes the sum of the Hamming distances:
$\sum_i h(r_i,r_{i+1})$ where $r_i$ is the $i^{\textrm{th}}$~row, for
$h(x,y) = |\{i|x_i\not= y_i\}|$.
If all $2^L$~different rows are present, the GC sort would
be an optimal solution to this problem~\cite{pinar05}. The following
proposition shows that GC sort is also optimal
if all ${N \choose k}$ $k$-of-$N$ codes are present. The same is 
not true of lexicographic order when $k>1$: 0110 immediately follows 1001 among
2-of-4 codes, but their Hamming distance is 4.

\begin{proposition}\label{prop:graycode}We can enumerate,  in GC order, all $k$-of-$N$ codes
in time $O(k {N \choose k})$ (optimal complexity).
Moreover, the Hamming distance between successive codes is minimal (=2).
\end{proposition}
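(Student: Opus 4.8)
The plan is to settle the two optimality assertions first, since both are short, and then devote the real work to the exact-distance claim. For minimality of the Hamming distance: any two distinct $k$-of-$N$ codes carry the same number of $1$s, so they must differ in an \emph{even} number of positions, which is therefore at least $2$; hence $2$ is the least distance achievable between distinct codes, and it suffices to produce an order attaining it at every step. For time optimality: if each code is stored sparsely as the sorted list of its $k$ one-positions, a single code already occupies $\Theta(k)$ space and there are $\binom{N}{k}$ of them, so merely emitting the output costs $\Omega\!\left(k\binom{N}{k}\right)$; an algorithm meeting this bound is thus optimal, and only its existence remains to be shown.

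Next I would identify the GC order of Section~\ref{sec:sorting-rows} with the binary reflected Gray code. When two vectors first differ at position $j$, their common prefix has a fixed parity $\pi=a_1\oplus\cdots\oplus a_{j-1}$, and the comparator rule ``$a<b$ iff $a_j=\pi$'' puts $0$ before $1$ when $\pi=0$ and $1$ before $0$ when $\pi=1$ --- precisely the reflection rule. Writing $G(N,k)$ for the subsequence of weight-$k$ words of the reflected Gray code $\Gamma_N$, taken in order of occurrence, the standard recurrence $\Gamma_N=0\,\Gamma_{N-1},\,1\,\mathrm{rev}(\Gamma_{N-1})$ restricts, on keeping the weight-$k$ words, to
\[
  G(N,k)=0\cdot G(N-1,k),\ \ 1\cdot\mathrm{rev}\!\big(G(N-1,k-1)\big),
\]
with base cases $G(N,0)=(0^N)$ and $G(k,k)=(1^k)$; this is the revolving-door enumeration of combinations.

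I would then prove by induction on $N$ that successive codes in $G(N,k)$ differ in exactly two positions, carrying along the auxiliary claim that the first and last codes of $G(N,k)$ are $0^{N-k}1^k$ and $1\,0^{N-k}1^{k-1}$ (both read off the same recurrence). Pairs lying wholly inside $0\cdot G(N-1,k)$ or inside $1\cdot\mathrm{rev}(G(N-1,k-1))$ inherit distance $2$ from the hypothesis, since prefixing a fixed bit and reversing a sequence both preserve its set of adjacent pairs. The only fresh pair is the junction: the last word of the first block against the first word of the second. Substituting the boundary formulas gives $0\,1\,0^{N-1-k}1^{k-1}$ and $1\,1\,0^{N-k}1^{k-2}$, which a position-by-position comparison shows differ exactly at position $1$ and position $N-k+2$. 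The main obstacle is precisely this boundary bookkeeping --- keeping the first/last-word lemma synchronized with the distance induction, and separately checking the degenerate endpoints $k=1$ and $k=N$, where the $1^{k-2}$ block is empty and the junction must be verified by hand.

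Finally, for the running time I would unfold the recurrence into an iterative generator that stores the current code as its sorted length-$k$ position list and, at each of the $\binom{N}{k}-1$ transitions, performs the single two-element change (remove one position, insert one) dictated by the order; updating and writing the list costs $O(k)$ per step, for $O\!\left(k\binom{N}{k}\right)$ overall, matching the lower bound. Citing a loopless revolving-door procedure would remove the per-step search, but since outputting a code is already $\Theta(k)$ work, nothing sharper than $O\!\left(k\binom{N}{k}\right)$ is possible or needed.
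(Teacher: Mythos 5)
Your proof is correct, but it takes a genuinely different route from the paper's. The paper argues by direct construction: it sweeps the array $a_1,\ldots,a_k$ of one-positions with nested loops of alternating direction ($a_1$ ascending, $a_2$ descending, $a_3$ ascending, and so on) and then asserts, by inspection, that the codes come out in GC order and that each loop turnover changes exactly two positions. You instead identify the XOR-prefix comparator with binary reflected Gray-code order, restrict the recurrence $\Gamma_N=0\,\Gamma_{N-1},\,1\,\mathrm{rev}(\Gamma_{N-1})$ to weight-$k$ words to obtain the revolving-door recurrence, get the ordering claim for free (a subsequence of a sorted list is sorted), and prove the distance-$2$ claim by induction with first/last-word lemmas; your junction computation (differences at positions $1$ and $N-k+2$) is correct. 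The two routes generate the same combinatorial object---in fact your list is exactly the paper's list reversed (the paper's loops start at $10^{N-k}1^{k-1}$, yours at $0^{N-k}1^k$), and your identification even reveals that, under the paper's own comparator, its loops emit codes in \emph{decreasing} GC order, a detail the ``by inspection'' step glosses over and which is immaterial to the result. What your route buys is rigor: the induction is mechanically checkable, and you justify the two side-claims that the paper merely asserts, namely that $2$ is the minimum achievable distance (distinct equal-weight codes differ in an even number of positions) and that $O(k\binom{N}{k})$ time is optimal (it is the size of the output). What the paper's route buys is brevity and an immediately implementable loop nest with no recursion or direction flags. The only loose ends on your side are the degenerate cases $k=1$ and $k=N$, which you flag but do not carry out; they do check out ($k=N$ gives a one-element list, and for $k=1$ the junction pair $010^{N-2}$, $10^{N-1}$ has distance $2$), so this is a matter of completeness, not correctness.
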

\begin{proof}
%%%%%%%%%%%%
% Daniel: this could be more elegant, but maybe this is easier to grasp.
% In practice k will always be small, so this more practical than it may seem.
%%%%%%%%%%
%\daniel{Insight: what follows shows that you'd be silly, when implementing
%grayall, to generate all codes and then sort them.}
Let $a$ be an array of size $k$ indicating the positions of the ones in $k$-of-$N$ codes.
As the external loop, vary the value $a_1$ from 1 to $N-k+1$. Within this loop,
vary the value $a_2$ from $N-k+2$ down to $a_1+1$. Inside this second loop,
vary the value of $a_3$ from $a_2+1$ up to $N-k+3$, and so on.
By inspection, we see that all possible  codes are generated in increasing GC order.
%%%
% next bit could also be improved
%
To see that the Hamming distance between successive codes is 2, consider what
happens when $a_i$ completes a loop. Suppose that $i$ is odd and greater than 1,
 then $a_i$ 
had value $N-k+i$ and it will take value $a_{i-1}+1$. Meanwhile, by construction,
 $a_{i+1}$ (if
it exists) remains at value $N-k+i+1$ whereas $a_{i+2}$ remains at value
$N-k+i+2$ and so on. The argument is similar if $i$ is even.
\end{proof}

\begin{figure}
\centering
\includegraphics[width=0.8\columnwidth]{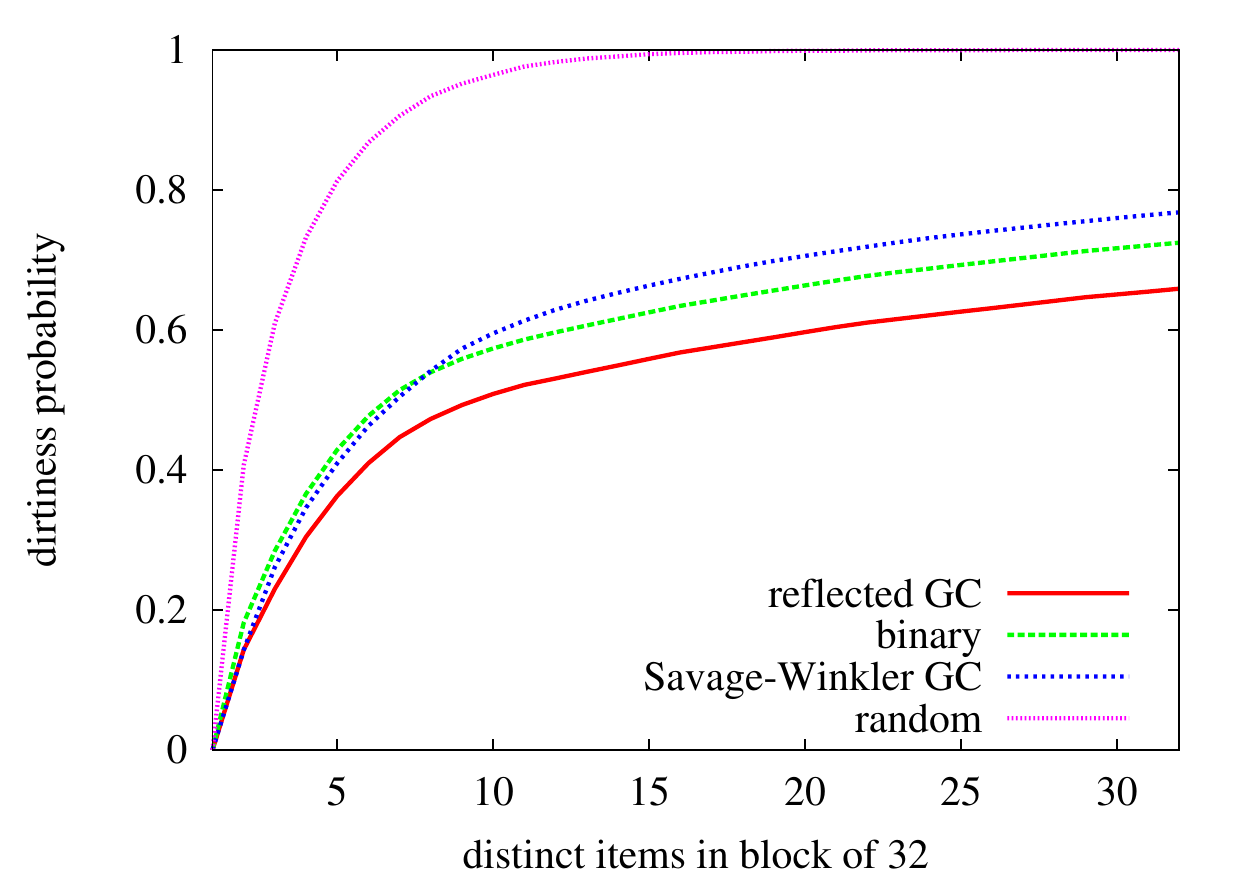}
\caption{\label{fig:adjdirty-thousand}Probabilities that a bitmap will contain a dirty word, when
several ($x$-axis) of 1000 possible 
attribute values are found in a 32-row chunk.
Effects are shown for values with $k$-of-$N$ codes that
are adjacent in GC order, adjacent
in lexicographic order, or randomly selected. % For $k=2$, GC-order
%was insignificantly better than lexicographic order (which is omitted). 
%Results for $k=3$ fall between
%those observed for $k=2$ and $k=4$ and are thus omitted. 
}
\end{figure}

For a given column, suppose that in a block of 32~rows, we have $j$~distinct attribute values.
We computed the average number of bitmaps
that would have a dirty word  (see Fig.~\ref{fig:adjdirty-thousand}).
Comparing $k$-of-$N$ codes that were adjacent in GC ordering against
$k$-of-$N$ codes that were lexicographically adjacent, the difference was
insignificant for $k=2$.   However, GC ordering is substantially better
for $k>2$, where bitmaps are denser.  Selecting the codes randomly is
disastrous. 
Hence, sorting part of a column---even one without long runs of identical
values---improves compression for $k>1$.

For encodings like BBC, WAH or EWAH, GC sorting is not optimal,
even when all $k$-of-$N$ codes are present. For example consider the
sequence of rows 1001, 1010, 1100, 0101, 0101, 0110, 0110, 0011. Using 4-bit
words, we see that a single bitmap contains a clean word (0000)
whereas by exchanging the fifth and second row, we get two clean words (0000
and 1111).

\subsection{Sorting bitmap codes}
%\daniel{or Sorting Bitmap Codes?}
%%%%%%%%%%%%%
% Finally, we talk about bitmap allocation
%%%%%%%%%%%%%%%
For a simple %bitmap 
index, the map from  attribute value to bitmaps is
inconsequential; for $k$-of-$N$ encodings, some bitmap allocations are more
compressible: consider an attribute with
two overwhelmingly frequent values and many other values that
occur once each.  If the table rows are given in random order,
%it is clear that 
the two frequent values should have codes that
differ 
%in as few bits   *paragraph squishing*
as little 
as possible.

There are several ways to allocate the bitmaps.
Firstly, the attribute values can be visited in alphabetical or numerical order,
or---for histogram-aware schemes---in order of frequency. Secondly, the bitmap codes
can be used in different orders. We consider lexicographical ordering
 (1100, 1010, 1001, 0110, \ldots) and GC
order (1001, 1010, 1100, 0101, \ldots) ordering (see proof of Proposition~\ref{prop:graycode}). 
For dense low-dimensional tables, GC
order is preferable~\cite{bda08} %;
%GC sorting the bitmap codes or the table rows has similar compression effects.
 and its compression effects
are comparable  to sorting
the index \emph{rows} in GC order. 
Meanwhile, it is technically easier to implement
since we can sort the table lexicographically and only use GC ordering
during the bitmap index construction.

\emph{Alpha-Lex} denotes sorting the table lexicographically 
and assigning bitmap codes so that
the $i^{\textrm{th}}$ attribute gets the lexicographically $i^{\textrm{th}}$
smallest bitmap code.  \emph{Gray-Lex} is similar, except that 
the $i^{\textrm{th}}$ attribute gets the rank-$i$ bitmap code
in GC order.  These two approaches are histogram oblivious---they
%removed quotes since we used term in the intro.  
ignore
the frequencies of attribute values.

%notation needed.  $f(value)?$  as its freq?
Knowing the frequency 
of each attribute value 
can improve code assignment when $k>1$. 
For instance,
clustering dirty words increases the compressibility.
Within a column, {Alpha-Lex} and {Gray-Lex} order runs of identical values
irrespective of the frequency: the sequence 
\texttt{afcccadeaceabe} may  become 
\texttt{aaaabccccdeeef}. For better compression, we should order
the attribute values---within a column---by their frequency 
%owen: flips stuff around to solve a linebreak problem
(e.g., \texttt{aaaacccceeebdf}). Allocating the bitmap codes in
GC order to the frequency-sorted attribute values, our \emph{Gray-Frequency}
sorts the table rows as follows.
Let $f(a_i)$ be
the frequency  of attribute $a_i$. Instead
of sorting the table rows $a_1, a_2,\ldots, a_d$, we
lexicographically sort the extended rows
%\begin{equation*}
$f(a_1), a_1, f(a_2), a_2,\ldots, f(a_d), a_d$
%\end{equation*} 
by
comparing the frequencies by their numerical value. 
The frequencies $f(a_i)$ are 
discarded prior to indexing.

\subsection{Choosing the column order}\label{sec:dimOrder}

%DBGEN,
%7, 11, 2526 and 400000.
%n=13977980
%
%Census-Income
% 91, 1240, 1478 and 99800
%n= 199523
% 
% Netflix
% 5, 2182, 17770, and 480189
% n=100480507
%%%%%%%
% It is maybe not too useful to stress the Canahuate et al. angle
% given that it is (as of yet) unpublished work.
%%%%%
\cut{(To be added later in the paper: we should distinguish the work of 
Canahuate et al.~\cite{pinarunpublished} where they reorder
bitmaps (which they call "columns") and our work were
we reorder "table columns".
This is important because they state: "While the overall compression ratio is not significantly affected by the order in which the 
columns are evaluated, the maximum column size, which translates to the worst case query
execution time, is reduced considerably."
In other words, they report no reduction of the index size under "column reordering". 
Our results differ substantially.
Their column reordering is akin to my bitmap interleaving.
)}%
 
%owen flipped things around to combine the first 2 prev sentences, because
%the flow seemed choppy
Lexicographic table sorting
uses the $i^\textrm{th}$ column as the $i^{\textrm{th}}$ sort key: it
uses the first column as the main key, the second column to break ties
when two rows have the same first component, and so on.
Some column orderings
lead to smaller indexes than others~\cite{bda08}. 

We model
the \emph{storage cost}  of a bitmap index
as
the sum of the number of dirty words and 
the number of sequences of identical clean words (1x11 or 0x00).
If a set of $L$~bitmaps has $x$~dirty words, then there are at most
$L+x$~sequences of clean words; the storage cost is at most $2x+L$.
This bound will be tighter for sparser bitmaps.
Because the simple index of a column has at most $n$~1-bits, it has
at most $n$~dirty words, and thus, the storage cost is at most 
$3n$. The next proposition shows that the storage cost 
of a sorted column is bounded by $5n_i$.

\begin{proposition}\label{prop:sortingbenef}
Using GC-sorted $k$-of-$L$ codes, a sorted column with $n_i$ distinct values has no more than $2n_i$~dirty words,
and the storage cost  is no more than $4n_i+\lceil k n_i^{1/k}\rceil$.
\end{proposition}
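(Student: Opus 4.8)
The plan is to reduce both claims to a single counting argument on bit-flips between consecutive rows, and then invoke the storage-cost bound established immediately before the proposition. First I would use the fact that the column is sorted: the $n_i$ distinct values occupy $n_i$ contiguous blocks of rows, so the only rows at which any bitmap changes value are the $n_i-1$ boundaries between consecutive blocks; inside a block every bitmap is constant. Thus all ``activity'' in the $L$ bitmaps is concentrated at these $n_i-1$ boundaries.

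The key structural step is to control how many bits change at each boundary. Because the codes are assigned in GC order and the $n_i$ values occupy the consecutive GC ranks $1,\ldots,n_i$, Proposition~\ref{prop:graycode} applies to every consecutive pair: the codes of two adjacent distinct values have Hamming distance exactly $2$. Hence exactly two bits flip at each of the $n_i-1$ boundaries, giving a total of $2(n_i-1)$ bit-flips summed over all $L$ bitmaps. I would then relate flips to dirty words: a word of bitmap $b$ is dirty only if bit $b$ changes value \emph{strictly inside} that word, and each such change is one of the flips just counted. A flip that lands exactly on a word boundary produces two clean words rather than a dirty one, so it is simply not charged. Since each flip sits strictly inside at most one word and is charged to at most one dirty (word, bitmap) pair, the number of dirty words is at most the number of flips, namely $2(n_i-1)\le 2n_i$, which proves the first claim.

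For the storage-cost bound I would substitute $x\le 2n_i$ into the inequality $\text{cost}\le 2x+L$ derived just above the proposition (storage cost equals dirty words plus clean-word sequences, and the latter is at most $L+x$). With $L=\lceil k n_i^{1/k}\rceil$, the number of bitmaps sufficient to encode $n_i$ values, this yields $\text{cost}\le 4n_i+\lceil k n_i^{1/k}\rceil$, as required.

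I expect the main obstacle to be the bookkeeping in the flip-to-dirty-word charging rather than any deep idea: I must verify that flips falling on word boundaries are correctly excluded (they cost nothing), that no dirty word is double-counted or missed when several flips share a word, and, crucially, that the Hamming-distance-$2$ property genuinely applies to the subset of codes actually used (the initial GC ranks $1$ through $n_i$) and not merely to the full enumeration of all $\binom{L}{k}$ codes guaranteed by Proposition~\ref{prop:graycode}.
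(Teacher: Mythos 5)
Your proposal is correct and follows essentially the same route as the paper's proof: sortedness confines all bitmap activity to the boundaries between blocks of identical values, GC-adjacency of the assigned codes (via Proposition~\ref{prop:graycode}) gives Hamming distance~2 per transition, hence at most $2n_i$ dirty words, and substituting into the bound $\text{cost}\le 2x+L$ with $L=\lceil k n_i^{1/k}\rceil$ bitmaps gives the stated storage cost. Your extra bookkeeping (flips on word boundaries, double-counting, and checking that consecutive GC ranks $1,\ldots,n_i$ inherit the distance-2 property) only makes explicit what the paper's terser argument leaves implicit.
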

\omitproof{%
\begin{proof}
Using $\lceil k n_i^{1/k}\rceil$~bitmaps is sufficient to represent $n_i$~values.
Because the column is sorted, we know that the Hamming distance of the bitmap
rows corresponding to two successive and different attribute values is 2. Thus
every transition creates at most two dirty words. There are
$n_i$ transitions, and thus at most $2n_i$ dirty words. This proves the result.
\end{proof}}

For $k=1$, Proposition~\ref{prop:sortingbenef} is true irrespective of the order
of the values, as long as identical values appear sequentially.
Another extreme is to assume that all 1-bits are randomly 
distributed. Then \cut{for } sparse bitmap indexes\cut{, we} %
 have
$\approx \delta(r,L,n)=(1-(1-\frac{r}{Ln})^w)\frac{Ln}{w}$~dirty words where $r$ is the number of 1-bits, $L$ is
the number of bitmaps and $w$ is the word length ($w=32$).
Hence, we have an approximate storage cost of 
$2\delta+\lceil k n_i^{1/k}\rceil$. 
%\owen{I don't believe the following statement is true, since
%it does not depend on $r$.  If it is a simple instantiation of
%the previous formula with $L=n_i$, then the fix is clear but then
%I do not see the point being made.}
%For a simple bitmap ($k=1$) with $n_i$ attribute values, then
%$\delta = (1-(1-\frac{1}{n_i})^w)\frac{n_i n}{w}$.
The \emph{gain} of column $\mathcal{C}$ is the
difference between the expected storage cost of a randomly row-shuffled $\mathcal{C}$,
minus the storage cost of a sorted $\mathcal{C}$.
We estimate the gain  by
$2\delta(kn,\lceil k n_i^{1/k}\rceil,n)- 4n_i$ 
(see Fig.~\ref{fig:theorygain}) for columns with uniform histograms.
The gain is modal: it increases until a maximum is reached and then it
decreases.
The maximum gain is reached at $\approx \left( n(w-1)/2 \right)^{k/(k+1)}$:
%%%%%%%%%%%%%%%%
% A word from Daniel:
%%%%%%%%%%%%
% How did I derive this last approximation? Here is the general idea, take
% the derivative of $2\delta(kn,\lceil k n_i^{1/k}\rceil,n)- 4n_i$ with respect
% to n_i, the set n_i = z/n. Assume that z is much smaller than n. Do some
% taylor expansion. Set the derivative to zero. Then solve for z. Once this is
% done, compute n/z and say that, roughly this is the maximum.
%
% This is actually "engineering mathematics" and there are formal methods
% to justify all that, but it takes time, and I do not have it. Plus exposing
% the whole analysis would require at least half a page, and there would be
% lots of condition checking.
%%%%%%%%%%%%%%%%%
for $n=100,000$ and $w=32$, the maximum is reached at $\approx 1,200$ for $k=1$ and
at $\approx 13,400$ for $k=2$.
%\owen{Is it clear that the ``for uniform histograms'' is still in effect? Not sure.}
% Daniel:I moved this sentence at the end of the para. to answer Owen's question.
Skewed histograms have a lesser gain for a fixed cardinality $n_i$. 
\cut{We can generalize our measure of gain if the per-bitmap histogram ($f_j$) %($\{f_j\}_{i=1,2,\ldots,\lceil kn_i^{1/k}\rceil}$)
 is
 available: set
$\delta = \sum_{j=1}^{\lceil k n_i^{1/k}\rceil}  (1-(1-\frac{f_j}{n})^w) \frac{n}{w}$. }
\cut{%
\daniel{Next bit seems intuitively clear:}
Intuitively, simple indexes ($k=1$) have
their gain diminish faster as a function of the histogram skew.
}% 
%\daniel{
%	A sensible heuristic might be to sort
%	the columns by the gain function. That would maximize the compression
%	of the first column.
%}
%\owen{sounds like the real heuristic 1 emerges.}
%\daniel{Yes, at least for $k=1$, it seems that columns with more than
%n/32 should not be put up front. }

\begin{figure}\centering
%\subfigure[theoretical model ($2\delta(kn,\lceil k n_i^{1/k}\rceil,n)- 4n_i$ ) ]{
\includegraphics[width=0.8\columnwidth]{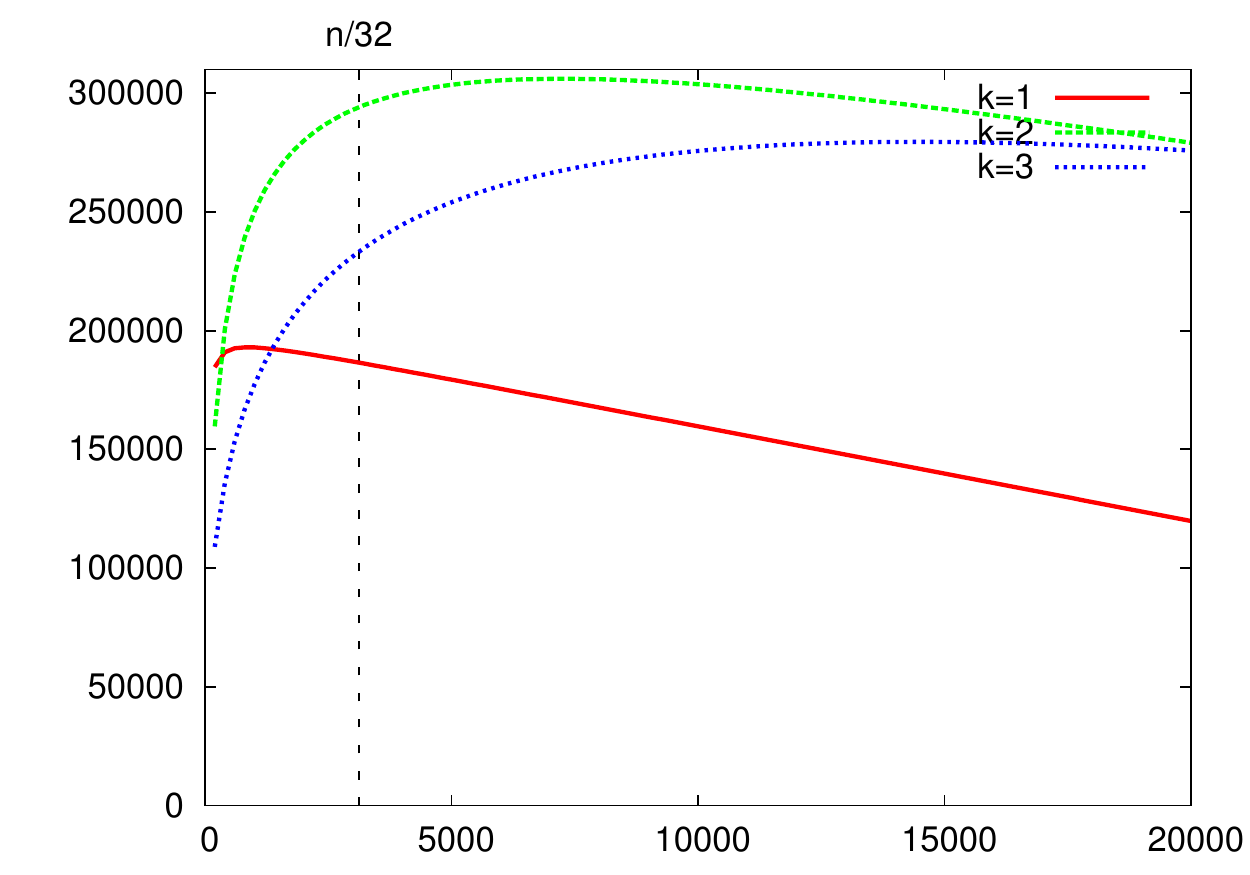}
%}
%\subfigure[experimental (uniformly distributed columns) ]{
%\includegraphics[width=0.8\columnwidth]{expgain}
%}
%\subfigure[experimental (skewed) \daniel{may not be included for lack of space}]{
%\includegraphics[width=0.8\columnwidth]{expgainskewed}
%}
\caption{\label{fig:theorygain}Storage gain 
in words for sorting a given column with $100,000$~rows and various number of attribute values
($2\delta(kn,\lceil k n_i^{1/k}\rceil,n)- 4n_i$ ).}
\end{figure}

%Assume we have uniformly distributed, statistically independent
%columns. 
After lexicographic sorting, the $i^{\textrm{th}}$~column 
is divided into at most $n_1 n_2\cdots n_{i-1}$~sorted 
blocks.
 Hence, it has at most $2 n_1 \cdots n_i$~dirty
words\cut{\ and a storage cost of at most $\min(4 n_1\cdots n_i,2\delta(kn,\lceil k n_i^{1/k}\rceil,n))+\lceil k n_i^{1/k}\rceil$~words}.
%The upper bound  $2 n_1 \cdots n_i$ on the number of dirty
%words is the same irrespective of the order of the columns $1,2 \ldots, i$.
%\daniel{Using this cost model, we can pick the best column ordering
%using only the cardinalities $n_i$, the encoding ($k$)
%and the number of rows.}
%\daniel{This is my new explanation for Owen's second factor.}
When the distributions are skewed, the $i^{\textrm{th}}$~column will have
blocks of different 
lengths and their ordering depends
 on how the columns are ordered.  \cut{For example, if the first dimension
is skewed and the second uniform, the short blocks will be clustered,
whereas the reverse is true if you exchange the columns. Clustering
the short blocks, and thus the dirty words, increases compressibility. 
Thus, it may be preferable to put skewed columns in the first positions
even though they have lesser sorting gain. }%
To assess these effects, we generated data with 4~independent  columns:  using
uniformly distributed dimensions of different sizes (see Fig.~\ref{fig:uni-order}) and 
using same-size dimensions of different skew (see Fig.~\ref{fig:zipfwide-order}). 
We then determined
the Gray-Lex % Daniel verified that Gray-Frequency was not a factor
index size for each of the 4! different dimension orderings.
Based on these results, for sparse indexes ($k=1$), dimensions should be ordered from least  to most skewed, and
from smallest to largest;
%\owen{I reversed the directions. 1234 is smallest first or most skewed first}
whereas the opposite is true for $k>1$. 

\begin{figure}
\centering
\subfigure[Uniform histograms with cardinalities 200, 400, 600, 800\label{fig:uni-order}]{%
\includegraphics[width=0.8\columnwidth]{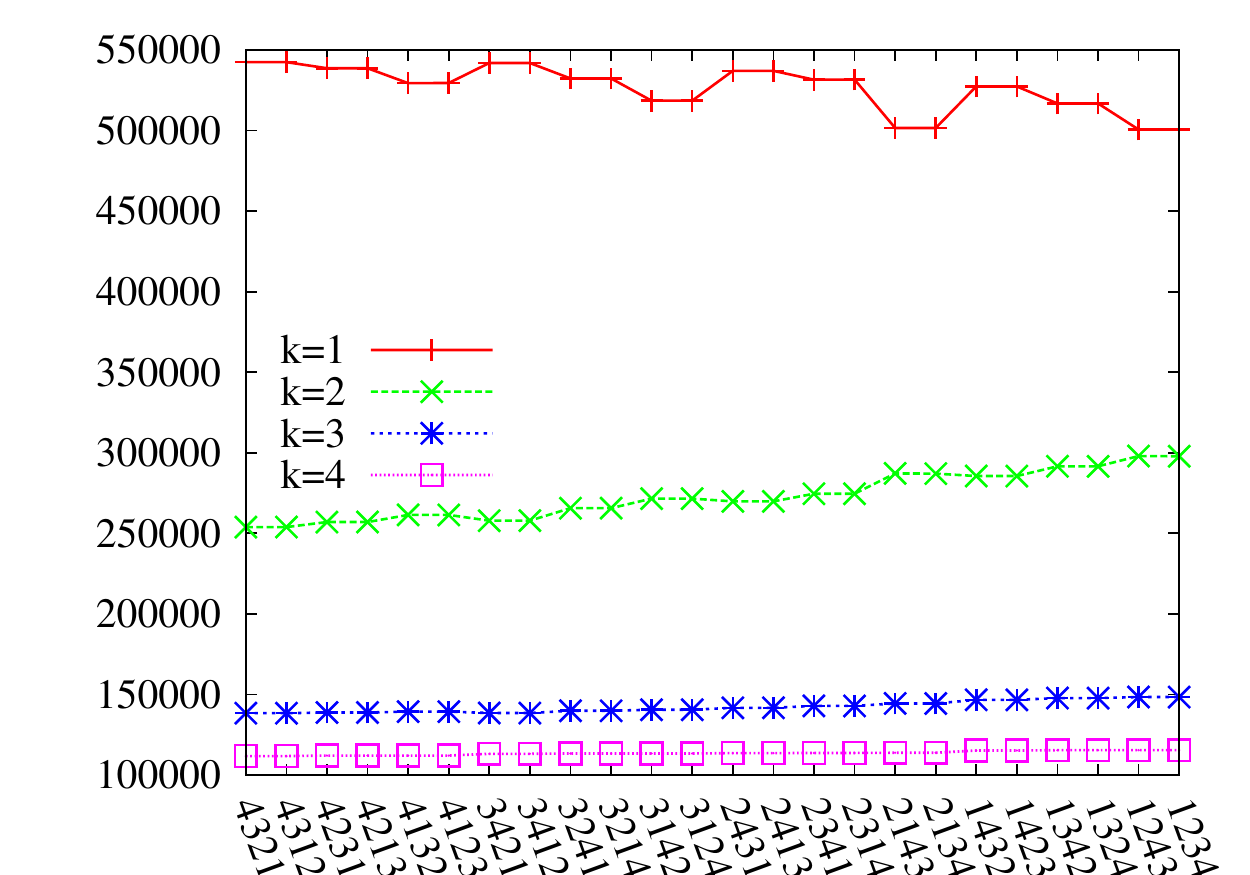}
}
\subfigure[Zipfian data with skew parameters 1.6, 1.2, 0.8 and 0.4\label{fig:zipfwide-order}]{%
\includegraphics[width=0.8\columnwidth]{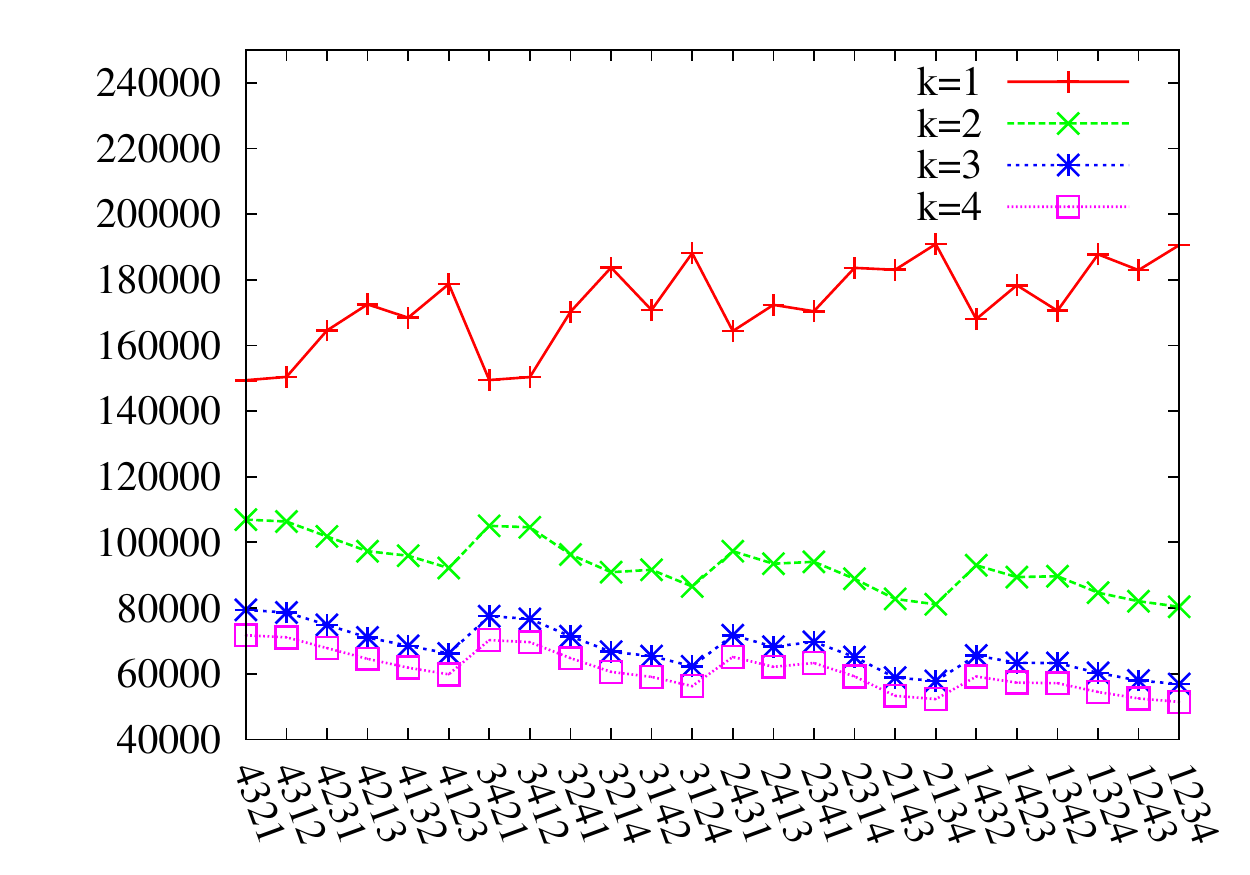}
}
\caption{Index sizes in words for various dimension orders on synthetic
 data ($100,000$ rows). Zipfian columns have 100 distinct values.\label{fig:toyorderings}
%Owen: I think we *must* explain the ordering labels or readers will
%be unsure what is meant.  Plus, it is not intuitive about what is
%ascending and what's descending, in that you were confused and I
%get a headache whenever I think about it.
Ordering ``1234'' indicates ordering by descending skew (Zipfian) or
ascending cardinality (uniform).
%
% All I managed to show was the my work on models was not fine enough... but also, since I can't
% cope with Owen's skewed files, I am not than useless. Ok. This means that there is further work
% to be done later.
%\daniel{TODO: Also plot my $\min(4 n_1\cdots n_i,2\delta(kn,\lceil k n_i^{1/k}\rceil,n))+\lceil k n_i^{1/k}\rceil$
%estimate and compare!}
}
\end{figure}

%\fixme{Owen got his (omitted) paragraph backward?.  Nope: it's correct... 4321 is best
%and 1 is
%most skewed dim (a note to this effect got cut).  My thinking at the time was
%probably that a more skewed dimension effectively has fewer moderately
%frequent values.  However ``more skew'' and ``more size'' might be a more
%natural way to expect things....so we could modify the bigzipfian dataset} 
%\cut{
%For $k=1$, dimensions should be ordered from least skewed to most.
%For $k>1$, the opposite ordering should be used.
%}
%\owen{story tying into the previous can go here}

%\owen{is this your acronym for the heuristic: Sensible Heuristic ...I... T...?}
%\daniel{I did not mean to utter bad words. This is indeed my internal codename for the heuristic. ;-)
%I am very unhappy with this part of the paper considering the huge amount of time invested. I will 
%need to revisit carefully and slowly this problem post-DOLAP.} 
%\looseness -1
A sensible heuristic might be to sort columns by increasing density ($\approx n_i^{-1/k}$). 
However, a very sparse 
%%%%%%%%%%%
% Hopefully, Daniel unconfused this sentence. Thanks to Owen for pointing out my mistakes.
%%%%%%%%%%%
%\owen{you might mean ``sparse'' since increasing density would put it last anyway.}
column  ($n_i^{1/k}\gg w$) 
%\owen{something is wrong here, as the kth root of ni will always be bigger than 1 and 1/w 
%will always be substantially less.}
will not benefit from sorting 
(see Fig.~\ref{fig:theorygain})
and should be put last. Hence, we use the following heuristic: columns are sorted in 
 decreasing order with respect to $\min(n_i^{-1/k}, (1-n_i^{-1/k})/(4w-1))$: this function 
 is maximum  at density $n_i^{-1/k}= 1/(4w)$ and it goes down to zero as the density goes to 1. 
 In Fig.~\ref{fig:uni-order}, this heuristic makes the best choice for all values of $k$.
We  consider this heuristic further
in Section~\ref{sec:exper-ordering}.
\cut{%
If we only sort using the first column as a sorting key and have
 statistically independent columns,
a column ordering with maximal first-column gain has minimal expected storage cost.
But when using lexicographic sorting, 
we might want to choose a column with little sorting gain as
the first column.
If the first column has attribute values with frequencies $f_1^{(1)}, f_2^{(1)},\ldots$, then
the second column has at most $\sum_{z=1}^{n_1} \min(k f_z^{(1)}, 2 n_2)$~dirty words assuming GC-sorted
$k$-of-$L$ codes are used.}
\cut{---for the $j^{\textrm{th}}$~column, we should consider
the frequencies of the $j-1$~tuples made by the entries of the $j-1$ first columns.
To see why this result must hold, consider a sequence of $f_z^{(1)}$~rows
for which the value of the first column is constant. Because the
values of the second column are sorted and there are at most 
$n_2$ of them, we have at most $2n_2$ dirty words by Proposition~\ref{prop:sortingbenef}.
Also, there are at most $k f_z^{(1)}$ 1-bits, and thus no more than 
$k f_z^{(1)}$~dirty words.} 
\cut{\daniel{This whole thing is awfully vague:}
This analysis suggests the following
remarks:
\begin{itemize}
\item If the histogram sizes ($n_i$) are small compared to
the frequencies of the individual attribute values ($n_i \leq k f_z^{(i-1)}$), then we can
bound the storage cost for $k=1$ of two columns by 
$4n_1+4 n_1 n_2 +n_2+\lceil k n_1^{1/k}\rceil+\lceil k n_2^{1/k}\rceil$ and it 
is maybe best to put the column with the largest histogram first.
\cut{%
$4n_1+\lceil k n_1^{1/k}\rceil+
2 \sum_{z=1}^{n_1} \min(k f_z^{(1)}, 2 n_2) +\lceil k n_2^{1/k}\rceil=
4n_1+4 n_1 n_2 +n_2+\lceil k n_1^{1/k}\rceil+\lceil k n_2^{1/k}\rceil$. Hence, it is best to put the column with the largest
histogram first. This is consistent with maximizing the gain of the first column.}
\item If  columns have relatively infrequent items ($f_z^{(i-1)}\leq n_i/k $),
then we expect that only the first column will benefit from the sorting, 
and it is best to maximize its gain.
\item However, when the first dimension has infrequent attribute
values ($f_z^{(i-1)}\leq n_i/k $) whereas the second dimension
has frequent attribute values ($f_z^{(i)}\leq n_{i-1}/k $), it might be
best to put the second dimension first. 
\end{itemize}
\daniel{End of evil vagueness?}}
\cut{%
\begin{figure}\centering
\includegraphics[width=0.8\columnwidth]{log2dskewedforowenk2}
\daniel{\\For $k=1$, it does not work:\\
\includegraphics[width=0.8\columnwidth]{log2dskewedforowen}
}\\
\daniel{A similar experiment without skew, but with different cardinalities does not work for any $k$:\\
\includegraphics[width=0.8\columnwidth]{log2duniforowenk2}
}
\caption{\label{fig:skewedforowen}Storage gain 
in words for sorting skewed tables with $n=10000$ and different skews ($k=2$)}
\end{figure}
}
\cut{%
\subsubsection{Uniform histograms}
\label{subsubsectionunihisto}
\daniel{Daniel has the conjecture that for uniform histograms, 
sorting by the gain is good enough. This is \textbf{proven} by the
following figure:\\
\includegraphics[width=0.8\columnwidth]{uniformhistograms}\\
In other words, if you can assume that the histograms are
flat, just sort the columns in inverse order of the gain and
you will be fine. We should be able to prove this analytically.}
}

\cut{%
\subsubsection{Simple bitmap indexes}
\daniel{Daniel had the conjecture that for $k=1$, basically, you do not need
to worry about the effect of the bias. I proved this false with
the simpleindex.sh script. So we can drop this subsubsection.}
}

\cut{%
\subsubsection{Measuring the bias of the histogram}
\daniel{A natural bias measure is the difference between the
gain measured above from $k$ and $n_i$, with a uniform histogram assumption,
and the (lower) gain measured by taking into account the full histogram.}
\daniel{Indeed, using the result from the subsubsection~\ref{subsubsectionunihisto},
we have that when the bias is zero, according to this measure,
then you should just sort by gain because the distributions are uniform.}
\daniel{A wild conjecture is that whatever a column loses in gain
due to its bias, it makes it up by allowing the following columns
to be better sorted.}
}

\cut{\owen{Probably obsolete discussion: Looking at experimental(skewed), it appears that the conclusion
for $k=1$ is unaffected: n/32 gets best gain.  For k=2 uniform, and k=3
(skewed or uniform),  your theoretical curves suggest bigger is better.
But for k=2 skewed, there appears to be a gentle peak around 1500.\\
Daniel's figures 3b can be compared to Owen's~\ref{fig:zipfs2-order}
and 3c to~\ref{fig:zipfs10}.  For k=1, Owen's conclusion can be that
the same orders are suitable but the difference is less extreme
when skew is high.  Daniel's does not seem to contradict this: gain
has similar peak location, but is less extreme.  For both Daniel and
Owen, ramping up the skew does not change the conclusion for $k>1$
that big dimensions should go first.  For Owen, k=2, the difference
between the best and worst ordering was a factor of 2, regardless of
large or small skew.  Daniel's result would have lead to the guess that
ordering was less important for higher skew.  Hmm.}
}

\cut{\owen{Experimentally, skew seems to matter a lot (see pix toward end), but
 affects $k=1$ in an opposite way from how it affects $k>1$.
 For sanity, maybe we are stuck with a distribution oblivious model here,
 but somehow, eventually, we ought to try to knit the skew effects 
(whether we can  explain it mathematically or just describe what we see)
together with these results relating table length, $k$ and dimension
size.}}
\cut{\daniel{I have added skewed distributions in my figures. It seems that column
order is less important if the distributions are very skewed. Which makes sense
if you think about it. Imagine a distribution where all values occur once, except
for one that is repeated often. If the number of distinct attribute values,
this is nearly as having just one attribute value, and then, clearly, column order
does not matter.}}

%
%\cut{
%For this model, we can determine for which values of $n_1$ and $n_2$ the storage
%costs are the same, no matter what the order of the columns (the breakeven points).
%Assuming that $n_2^{1/k}>n_1^{1/k}\gg 1$, then we set 
%$2n_1+n_1^{1/k}+
% 2nk( 1-\frac{(w-1)k}{2 n_2^{1/k}}) + n_2^{1/k}=
% 2n_2+n_2^{1/k}+
% 2nk( 1-\frac{(w-1)k}{2 n_1^{1/k}}) + n_1^{1/k}$, and
% we get that either $n_2=n_1$ or $n_2=k^2 n (w-1)/(2n_1)$.
% Thus, we conclude that if $n_1 \leq n_2\leq k^2 n (w-1)/(2n_1)$, then it is
% best to set the second column first ($w=32$). }

%\owen{maybe Daniel wants something smarter or plans to generalize this
%to more than 3 dimensions; if not...well, we need something.}
\cut{\daniel{An obvious question maybe is why not sort by gain as a heuristic? This would tend
to optimize mostly the first few dimensions. But looking at table 7 from BDA,
it looks like sorting by gain might do the right thing. 
\owen{After sorting by some first dimension, you essentially create a
bunch of sort-able sub problems, one for each attribute value in the 
first dimension.  To choose the second sort dimension, you should not
apply the original gain formula.  Instead, the number of rows should be
adjusted to be suitable for the sizes of the sub-problems.  We could 
choose average, which is appropriate if the first dimension was uniform.
However, if the first dimension was fairly big and uniform, it does not
much matter how you order the remaining dimensions.  So try to help
skewed ones.  I suggest that you let $n$ be the number of occurrences
of the most frequent value in dimension 1, for choosing dimension 2.
And for choosing dim 3, let $n$ be the number of occurrences of the 
most frequent pair
$(v1,v2)$ where $v1$ is from the first dimension, $v2$ is from the
second dimension. Etc.\\
If you just generate gains for all dims using the same $n$, then 
the choice made for the subproblems will not avoid excessively large
dimensions.
}
The histogram-aware
formula is really not that hard:
$\sum_{j=1}^{\lceil k n_i^{1/k}\rceil} 2 (1-(1-f_j/n)^w) n/w+\lceil k n_i^{1/k}\rceil-
2n_i$ (assuming I did not screw this up) where $f_j$ represents the per-bitmap
histogram.
One can validate that the formula is roughly sensible by taking each
column, shuffling it, computing the size, the subtracting the 
size of the index over the sorted column.
}
}

\cut{%
In Section~\ref{sec:exper-ordering}, we evaluate the following heuristics:
\daniel{For each heuristic, we could plot a corresponding cost measure that it
seeks to minimize, something like $n^{d-1} \min(n_1, n/32)+ n^{d-2} \min(n_2, n/32)+\ldots$.  }
\begin{enumerate}
\item sort the columns in decreasing order with respect to $\min(n_i,k^2 n (w-1) / (2\max_{j=1,\ldots,d} n_j)$;
\item sort the clumsy in decreasing order with respect to $\min(n_i, n/32)$ \daniel{If we are going
to set an arbitrary point such as $n/32$, why not pick $\approx \left( \frac{w-1}{w} \right)^{\frac{k}{k+1}} n^{k/(k+1)}$?};
\item sort the clumsy in decreasing order with respect to the number of attribute values
occurring more than 32~times.
\end{enumerate} 
\cut{As a heuristic, in Section~\ref{sec:exper-ordering} we evaluate the following approach, which 
requires knowing only the schema information and the number of rows, $n$:
choose as the primary sort key the dimension $D_i$ with largest
$n_i$, as long as $n / n_i \geq 32$.  
\owen{vaguely resembles the histo approach sketched at the end of 4.3 of BDA, I guess}
The second sort key is the 
next largest dimension, and so forth.  This
is our \textbf{ordering heuristic 2}.  Another and similar 
heuristic~\cite{bda08} requires
computing histograms: we favour the dimensions in order of
descending size, where we consider only values that have more than 32 occurrences. 
 This is \textbf{ordering heuristic 3}.}
}
%\daniel{Daniel cut heuristic 4 because he does not know how to craft a fancy skew-aware
%heuristic. As of June 23rd, 2008.}
\cut{%
If $k=1$, favoured dimensions should
have low skew (calculate with statistical methods),
except that it should not begin with a small dimension followed
by a dimension that does not have some highly frequent
members.    

If $k>1$, favoured dimensions should have high skew and large size.
\owen{Some combination is required?}
As $k$ increases, size becomes less significant.
\owen{Some combination of skew and the kth root of the dim size,
representing the number of bitmaps?}

This leads to \textbf{ordering heuristic 4:}  For $k=1$, order
dimensions by increasing skew (need a quick test for that).
For $k>1$, for each dimension $D$ we
compute $g_D |D|^{1/k}$, where $g_D$ is $<$some statistical test that
estimates skew in a reasonable way$>$.  Order dimension by
decreasing value of this product.
}

%\subsection{Interleaving}
\subsection{Avoiding column order}  %Owen did not like "interleaving" 

\cut{%
%though we should mention "interleaved" word if this is chopped
%This stays or goes in tandem with the experimental evaluation in later section
%
Although we suggest first sorting the  table and then encoding
the sorted table into a compressed bitmap index, our approach appears
less flexible than first materializing an uncompressed index,
then sorting, then compressing. 
However, the B-tree method sketched in Section~\ref{sec:sorting-rows}
yields an equivalent result, enabling a more equal treatment of
(primary sort key, second sort key, and so forth).  
\cut{Except for the first
few dimensions, many dimensions do not have a large effect on 
the sort.   We can treat dimensions more equally if we consider the
individual bitmaps arising from the dimensions.}
%new notation L_i   
Suppose dimension $i$ leads to bitmaps $b_{i,1}, b_{i,2}, \ldots b_{i,L_i}$.
One ``interleaving'' scheme for would be to sort the index lexicographically
by $b_{1,1}, b_{2,1}, \ldots, b_{D,1}, b_{1,2}, b_{2,2}, \ldots$.
%\owen{skipping proportional interleaved, since it was another bad idea.}  
(Canahuate et al.\  similarly propose
to order individual bitmaps by their densities~\cite{pinarunpublished}.)

\notessential{%
It \emph{is} possible to sort the  table in a bitmap-interleaved
fashion, without materializing the bitmap index first.   Any
general-purpose external-memory sorting algorithm may be used, providing
we may specify how rows are compared.   To compare rows $r_1$
and $r_2$, we temporarily convert the rows to their corresponding
lines in the bitmap index, and then compare the two lines as desired
(by Gray code ordering, or lexicographically according to the densest
bitmaps, etc.)  Although we know that O($n \log n$) comparisons will be
sufficient, each comparison can be expensive:  it entails encoding
into a long (probably sparse) vector.  \owen{we may have
already hit this enough earlier} A similar idea is to store
each row and (using sparse vector methods) its bitmap line, in an
external memory data structure for ordered sets: we used the B-trees 
provided by QDBM~\cite{qdbm}.
\owen{we could translate the stuff at end of BDA section 3 to elaborate.}
The key-comparison routine is then based on the stored bitmap line,
which only needs to be calculated once.  The disadvantage is that the
external memory data structure needs storage comparable to the
 table and a (row-) compressed bitmap index.  This implementation
was initially used for our GC sorting, but it proved at
least 100 times slower than lexicographically sorting.
}
}

%%%%%%%%%%%%%%%%%%%%%%%%%%%%%%%%%%%%%%%%%%%%%%%%%%%%
% Daniel is trying to pull together all of the interleaving topics
% in a concise paragraph
%%%%%%%%%%%%%%%%%%%%%%%%%%%%%%%%%%%%%%%%%%%%%%%%%%%%
As an alternative to lexicographic sort and column
reordering, we introduce \emph{Frequent-Component} sorting, which
uses histograms to help sort
without bias from a fixed dimension ordering.
% adding more details so that it is less mysterious
In sorting, we compare the frequency of the $i^\textrm{th}$ most frequent attribute values in 
each of two  rows without regard (except for possible tie-breaking)
to which columns they come from.
With appropriate pre- and post-processing, it is possible to implement
this approach using a standard sorting utility such as Unix \texttt{sort}.
\notessential{%
For two rows $r_1$ and $r_2$, $<_{\textrm{Frequent-Component}}$ first compares
 $(f(a_1,a_1))$ with $(f(a_2),a_2)$, where $a_1$ is the least
frequent  component in $r_1$ and $a_2$ is the least frequent 
component in $r_2$.  Note that $a_1$ and $a_2$ do not necessarily
come from the same dimension.  Ties are broken by the 
second-least-frequent components in $r_1$ and $r_2$, 
and so forth.

Frequent-Component can be implemented using a standard utility such as Unix
\texttt{sort}.  First, we sort the components of each row of the  table
into ascending frequency.  In this process, each component is
replaced by three consecutive components, $f(a)$, $a$, and $\textrm{pos}(a)$.
The third component records the column where $a$ was originally found.
Lexicographic sorting (via \texttt{sort} or a similar utility) of rows
follows, after which each row is put back to its original value
(by removing $f(a)$ and storing $a$ as component $\textrm{pos}(a)$).
}

\section{Picking the Right k-of-N}\label{sec:multi}
%\daniel{Picking the right or Picking the Right?}
%\owen{I think Right, but point's moot for DOLAP: style uppercases it anyway.}

Choosing $k$ and $N$ are important
decisions.  We choose a single $k$ value for all dimensions\footnote{%
Except that for columns with small $n_i$, we automatically adjust
$k$ downward when it exceeds the limits noted at the end of 
Section~\ref{sec:bitmapIndexes}.}, leaving
the possibility of varying $k$ by dimension as future work.
Larger
values of $k$ typically lead to a smaller index and a faster construction
time---although we have 
observed cases where $k=2$ makes a larger index. %does not. 
However, query times increase
with $k$: there is a construction time/speed
tradeoff.

\paragraph{Larger $k$ makes queries slower}
We can bound the additional cost of queries. Assume ${L_i \choose k}=n_i$.
%Note that if $k>1$, 
A given $k$-of-$L_i$ bitmap is the result of an OR operation over at most 
%$\frac{k}{L_i} {L_i \choose k}\leq 3 n_i^{\frac{k-1}{k}}$~unary  
$kn_i/L_i\leq 3 n_i^{(k-1)/k}$~unary  
bitmaps.  Because %$|\lor_i B_i|\leq \sum_i | B_i|$, 
$|\bigvee_i B_i|\leq \sum_i | B_i|$, 
the expected size of such a bitmap 
is no larger than  $3 n_i^{(k-1)/k}$~times the expected size of a unary bitmap. 
% Explanation: you have at most {L_i \choose k} attribute values 
% now think of a map from attribute values to bitmaps. Each attribute value
% is linked to k bitmaps. So you have k {L_i \choose k} edges. All edges
% end up in one of L_i bitmaps, so there is k/Li {L_i \choose k} edges per bitmap.
% Owen: that is nice reasoning, more straightforward than what I had used.
% 
%
% \owen{or something like that}\daniel{any reason why $N$ wouldn't do instead of $L_i$?}
%Its compressed (EWAH) representation can have no more than as
%many dirty words as would occur in total for each of the corresponding
%attribute values in the $k=1$ index.\cut{; if the attribute values are rare and independent
%of one another, then we expect to be close to this worst case.}
%Therefore, the expected EWAH size of a $k>1$ bitmap is at most
%$L_i-1 \choose k-1$ times bigger than the simple bitmap.
%Now, since $L_i$ is the smallest value such than $n_i \leq  {L_i \choose k}$
%and using identities ${L_i \choose k} = {L_{i-1} \choose k-1} + {L_i-1 \choose k}$
%and $\frac{L_i-1}{k} = {L_i \choose k} \frac{L-k}{L}$ we might have
%something like $\frac{n_i k}{L_i}$, and given we approximate $L_i$ by
%$k n_i^{\frac 1 k}$, we may have a bound on the size of one bitmap
%of $n_i^{\frac {k-1}{k}}$ \owen{if not wrong, it does not rely on observations
%about the size of the index usually not increasing with $k$.}
%
%
A query looking for one attribute value will have to AND together
$k$ of these denser bitmaps. 
The entire ANDing operation can be done
by $k-1$ pairwise ANDs that produce intermediate results whose EWAH sizes are
increasingly small:
$2k-1$~bitmaps are thus processed. 
%\owen{I think the 3 needed to go here, not below}
% Daniel: what we have now looks right
Hence, the expected time complexity of  an equality query on a dimension of size
$n_i$ is no more than  $3(2k-1) n_i^{\frac{k-1}{k}}$~times higher than the expected
cost of the same query on a $k=1$ index.

For a less pessimistic estimate of this dependence, observe that indexes
seldom increase in size when $k$ grows.  We may conservatively assume that
 index size is unchanged when $k$ changes.
Therefore the expected size of one bitmap grows as $\approx n_i^{-1/k}/k$, 
leading to queries whose cost is proportional to 
$(2-1/k)n_i^{-1/k}$.  Relative to the cost for $k=1$, which
is proportional to $1/n_i$, we can say that 
increasing $k$ leads to
queries that are%$3(2-1/k)n_i^{(k-1)/k}$ 
                 $(2-1/k)n_i^{(k-1)/k}$ 
times more expensive than
on a simple bitmap index.

For example, suppose $n_i = 100$, 
going from
$k=1$ to $k=2$ should increase query cost about 15 fold 
but no more than 90 fold. %30 fold.
%Increasing further from $k=1$ to $k=3$, we estimate the query cost to
%increase 36 times, and no more than 108 times.
%Increasing the dimension
%size considerably, say $n_i=10000$, we get more drastic differences:
%ratio $k=2/k=1$ is 150, $k=3/k=1$ is 774 and $k=4/k=1$ is 1750.
%
In summary, the move from $k=1$ to anything larger can have a 
dramatic negative effect on query speeds.  Once we are at $k=2$,
the incremental cost of going to $k=3$, $k=4$ is not  so high:
whereas the ratio $k=2/k=1$ goes as $\sqrt{n_i}$, the ratio
$k=3/k=2$ goes as $n_i^{1/6}$.

%\owen{I don't think we need to give pictures to to go with this.
%If the math is right.}

\cut{%
We would, at least, have to read all these
bitmaps, and so we might expect equality query times to grow with $k^2$.  
However, the total size of the index may be smaller.
To understand this tradeoff, we focus on primarily on how $k$ affects 
the index size.  
\textbf{daniel analysis to replace above.}
%\daniel
{Let me redo the analysis from scratch. Make it clear
that this is for equality queries only.
Ok. We know that taking AND between two bitmaps can be done
in time proportional to the sum of the compressed sizes of the bitmaps.
For the AND operation between $k$ bitmaps, a pessimistic bound
tells you that the operation is in time proportional to $(2k-1)B$
where $B$ is the size of the bitmaps (here, we should state
precisely the algorithm we use: merge two compressed bitmaps into 
another compressed bitmaps and so on). This is pessimistic because
AND operations between bitmaps tends to generate smaller and smaller bitmaps.
 Ok, we have $k n_i^{1/k}$ bitmaps where $n_i$ is the number of
attributes. We know
that the total size of the bitmap index hardly ever increases with $k$ and
usually diminishes. Let us be pessimistic and assume that the size
remains the same as $k$ increases. Then the size of each bitmap
goes as $\frac{S}{k n_i^{1/k}}$. Hence, my pessimistic bound
puts the query time proportional to 
$\frac{(2k-1) S}{k n_i^{1/k}}$. So, for big dimensions, there might
be a huge difference between the various $k$, but less so for the smaller
dimensions. Of course, the size of the index $S$ diminishes with $k$
and the $(2k-1)B$ is likely to be quite pessimistic for $k=4$.
Suppose we are dealing with a small dimension $n_i=100$, the
the ratio of the query speed of $k=2$ to $k=1$ is 15
whereas $k=3/k=1$ is  35 and $k=4/k=1$ is 55. Increasing the dimension
size considerably, say $n_i=10000$, we get more drastic differences:
ratio $k=2/k=1$ is 150, $k=3/k=1$ is 774 and $k=4/k=1$ is 1750.
To test it out I ran the following experiment. I generated a 1d table
with a uniformly distributed histogram. I set the table length to 100,000
rows and I varied the number of distinct values and $k$. I then
plotted the time (in seconds) to execute 1,000 queries on one bitmap (alternatively,
you could say that the time reported is in millisec):\\
\includegraphics[width=0.8\columnwidth]{unisynthtimeversuscardi}\\
Then I plotted the $\frac{(2k-1) S}{nbrbitmaps}$ estimate (in words):\\
\includegraphics[width=0.8\columnwidth]{unisynthbitmapsizeversuscardi}\\
There is no agreement between the bound and the data,  but it is a pessimistic bound.
 }
\daniel{Owen properly reminds us that hardware and OS-level caching will help
 the
looped queries, thus making our results a bit unrealistic. This should
be pointed out in the text. }
\daniel{I could easily plot something like $\frac{k S}{nbrbitmaps}$ which
would give us the amount of data being loaded up. It might be false to
equate it with the query time, but it certainly contribute to the
query time.}
}

\cut{%
\owen{This is more the subject of one of the other manuscripts\ldots}
\owen{following paragraph is probably removable junk; DL has earlier put
some similar accounting.}
With $k=1$, there is a wide range on the number of dirty words:
Since the index will have $rc$ 1-bits, it can have at most $rc$
dirty words. (Ignoring really long runs of zeros).  In the first
dimension.  The EWAH cost can be bounded by observing that
the worst case would have a run of zeros between every dirty word
in a bitmap,
as well as before the first and last dirty word.  For $k=1$,
a lower bound is that each of the bits must be accounted for.
Ideally, each bitmap would consist of a run of zeros,
a run of ones, and a final run of zeros.  It would require
3 words in  EWAH\@.  More realistically, we
would not have the run of ones aligned on word boundaries, and
thus a dirty word would would follow and precede the run of ones,
leading to
5 words per bitmap.
Thus, for $k=1$, our index EWAH size is  between $3N$ and 
$\min(3N, N+ 2*rc)$.
Repeating for $k=2$, we have a total of $2rc$ bits, and an
index EWAH size between $3 \Theta(\sqrt{N})$ (fixme) and 
$\min(3\Theta(\sqrt{n}), \Theta(\sqrt{N} + 4*rc)$
Thus, we see that as $k$ increases our bounds become broader.
\owen{It is easy to construct inputs meeting the lower bound, but
what about the upper bound?}
}

%\owen{One part of the explanation of kingjames k=1 better than k=2
%is that there are long runs in the first  two dimensions.}

\paragraph{Larger $k$ makes indexes smaller}
Consider the effect of a length 100 run of values $v_1$, followed
by 100 repetitions of $v_2$, then 100 of $v_3$, etc.  Regardless of
$k$, whenever we switch from $v_1$ to $v_{i+1}$ at least two bitmaps
will have to make transitions between 0 and 1. %For simplicity, assume
%the transition does not happen at a word boundary.  
Thus, unless the
transition appears at a word boundary,
 we create at least 2
%\owen{this does not contradict DL's similar cost-1 claim in nonzero
%words, because here  Owen messes up what would have been a clean 1 word,
%and messes up what would have been a clean 0 word.}
%\daniel{My claim was wrong. So there.}
dirty words whenever an attribute changes from row to row.  The
best case, where \emph{only} 2 dirty words are created, is achieved when $k=1$
for \emph{any} assignment of bitmap codes to attribute values.
For $k>1$ and $N$ as small as possible, it may not be possible to
achieve so few dirty words, or it may require a particular assignment
of bitmap codes to values.

%\owen{This may relate to why k=1 really likes dimension orderings with
%lots of repeated attribute values in the first dim. We can claim this is what is going on with KJV favouring k=1.  Also, dim size relatively small wrt rows}

Encodings with $k>1$ find their use when many (e.g. 15) attribute
values fall within a word-length boundary.  In that case, a
$k=1$ index will have at least 15 bitmaps with transitions
(and we can anticipate 15 dirty words).  However, if there were only
45 possible values in the dimension, we would not need more than 10 bitmaps
with $k=2$.  Hence, there would be at most 10 dirty words and
maybe less if we have sorted the data (see Fig.~\ref{fig:adjdirty-thousand}).  
\cut{In
fact, with 45 values in a dimension encoded with 2-of-10 encoding,
a random selection of 15 distinct codes has an expected number of 0.35
bits in common (see Proposition~\ref{numofexpectedbits}).}
%In fact, each of these bitmaps has a nonzero expectation
%of having a clean word --- it is small for a random subset
%of codes, but larger if we have managed to sort data so that
%a  cluster of ``adjacent'' codes is more likely 
%(see Fig.~\ref{fig:adjdirty-thousand}).

\cut{%
removing attempt to quantify (not very helpful, not very right)

\owen{this has a lot in common with things DL added on June 9,
so there is room to save space here.}

\begin{proposition}\label{numofexpectedbits}
Given $x$ distinct values with $k$-of-$N$ encodings, the expected number
of bits where all codes agree is $formula(x,k,N)$.
Something like $N ( (1-k/N)^x + (k/N)^x)$ but not exactly.
\end{proposition}

\begin{proof}
Depends on formula.  Sanity: 
formula should be $0$ when $x \geq {N-1 \choose k}$.
Should be $k$ when $x=1$.

The probability that any given bit position is clean is the
sum of the probability that it is a clean 1 and the probability
that it is a clean 0, since these possibilities exclude one
another.

In a code, suppose that each bit position has an independent
probability of being 1, and it is $k/N$.  (This is obviously wrong,
since if $k$ is 2 and you know two other bit positions are 1, then
the probability of any remaining bit position storing 1 is 0.  Oh
well, this is a starting point.  Also, I want $x$ distinct codes
(no repetition)).   The probability that in $x$
random codes, there are only zeros [in a pre-specified bit position]
 is $(1-k/N)^x$, and the probability
there are only ones is $(k/N)^x$.
\end{proof}

\paragraph{Comment:} For our case study, evaluating
$N ( (1-k/N)^x + (k/N)^x)$ with $N=10,\ k=2,\ x=15$ we have
0.35; in other words, we expect 9.65 dirty words rather than 15.
This 0.35 is not very significant.  For a fixed small value of
$k$, this wrong formula is significant wrt N when N is large and
x is fairly small.  Eg, x=4, N = 20, k=2 we have about $.9^4 20$
or about 13.  13 clean words, hence 7 dirty words.   However,
simple coding would have only 4 dirty words.
}
%
%\owen{when long runs are common, choose k=1.  If there is only one
%run of a particular value (as in the primary sort column), then the
%only benefit of k$>$1, if any, would come from being able to pack
%a group of ``related'' codes with some bit sharing.  I think.
%(Or just getting rid of the overhead from huge $L$.)
%This may motivate choosing different k values for different
%dimensions.
%}

\paragraph{Choosing $N$}

%\owen{Daniel will shoot me, but it is now even unclear how to choose N}.

It seems intuitive, having chosen $k$,
to choose $N$ to be as small as possible.   Yet, we have observed
cases where the resulting 2-of-$N$ indexes are much bigger
than 1-of-$N$ indexes.  Theoretically, this could be avoided 
if we allowed larger $N$, because one could aways append an additional
1 to every attribute's 1-of-$N$ code.  Since this would create one
more (clean) bitmap than the 1-of-$N$ index has, this
2-of-$N$ index would never be much larger than the 1-of-$N$ index.
So, if $N$ is unconstrained, we can see that there is never a
significant space advantage to choosing $k$ small.

Nevertheless, %one of 
the main advantage of $k>1$ is %the ability
% to
%have 
 fewer bitmaps. % In our experiments, we
We choose $N$ as small
as possible. % for any dimension. % $D_i$.
%, subject to ${N \choose k} \geq | D_i|$.
% if they haven't gotten that yet, they're totally lost...

\cut{%
\owen{you can obtain a k-of-N code from any 1-of-N by prepending k-1
columns of constant 1s.  Thus, if you do not constrain N, a k-of-N
code can always be almost as good (additive Theta(c*k) amount) as
good as a 1-of-N.  Question: if I constrain N to be as small as possible
(what we have been doing), can we make ourselves get a worse answer
for 2-of-N than 1-of-N?  I think so, from experiments.  Make a simple
manual example.}

\daniel{We have a funny k-of-N implementation where we pick a different
k per dimension when dimensions have too few attribute values. Daniel was
motivated by the following idea: if you have the same number of bitmaps, you
are better off with the lesser value of $k$.}\daniel{We could prepare a finer
analysis for DOLAP.}  this is in section 2

\owen{for
3-of-N coding and 4-of-N coding,  and -a weightgray, you obtain a smaller
index using  -K than -k on tweed6,3,4,52.}
\daniel{not necessarily faster performances.}
\owen{I am not sure if this is just a quirk of the heuristicness of
everything else that goes on.  In some cases, I think if you exchange the role of 1 and 0, what happens with -K and a ``too big'' value for the
dim size may be the (negation) of what happens with -k  and N-''too big''.
Since 1s and 0s are handled symmetrically in EWAH, this is not very
interesting.

We could look into this more, but I am not optimistic about this any more.
}
}

\section{Experimental Results}\label{sec:Experiment}

We present experiments to assess the effects of various
factors (choices of $k$, sorting approaches, dimension
orderings) in terms of EWAH index sizes.   These factors
also affect index creation
and query times (we report real wall-clock times).

\subsection{Platform} 
Our test programs\footnote{\url{http://code.google.com/p/lemurbitmapindex/}.}
were written in C++ and compiled by GNU
GCC~4.0.2 on an Apple Mac Pro with two double-core Intel
Xeon processors (2.66\,GHz) and 2\,GiB of RAM\@.
Lexicographic sorts of flat files were done using GNU coreutils 
sort version 6.9.
For all tests involving $k=1$, we used the sparse implementation
approached in Section~\ref{sec:compression} because without it, the Gray-Lex
index creation times were 20--100 times slower, depending on
the data set.
%These relative slowdown numbers were not measured on kaser (17 june), not the Mac Pro,
% so we cannot give actual numbers.  Relative should be pretty safe.
%nf 1M sample, original order: 14.3s with -s ; 7m45.6 without. Factor 33
%ci241850, original order: 2.0       with -s ; 39.8s     without.  Factor 20
%dbgen11063, original order: 111s    with -s ; >10374     without. Factor 94

%Add, where appropriate, Note about wall clock times and including the time to sort initially.

\subsection{Data sets used}\label{sec:datasets}

%real: netflix, kingjames4d, census-income and projections. Probably not dblp?

%fake: DBGEN; probably not zipf and unidep?

We primarily used four data sets, whose details are
summarized in Table~\ref{tab:caractDataSet}: Census-Income~\cite{KDDRepository}, DBGEN~\cite{DBGEN},
KJV-4grams, and Netflix~\cite{netflixprize}.
\cut{redundant: Census-Income has 42~dimensions and 
about 200~thousand facts.}%
 DBGEN is a synthetic data set, whereas 
% reviewer did not like kjv being 'real', ofk thinks this is not
% really a valid objection, but will work around it
%the other
%three are real.  
\cut{redundant: The fact table for DBGEN has 12~million facts and 
16~dimensions.}%  
KJV-4grams is a large list (including duplicates) of
%n-gram-like 
4-tuples of words obtained from the
verses in the King James Bible~\cite{Gutenberg}, % (obtained from Project 
%Gutenberg~\cite{Gutenberg}),
after stemming
with the Porter algorithm~\cite{275705} and removal of
stemmed words with three or fewer letters.
Occurrence of row $w_1, w_2, w_3, w_4$  indicates
that the first paragraph of a verse contains words $w_1$ through $w_4$, in this
order. %, with fewer than four intervening words between $w_i$ and
%$w_{i+1}$. 
This data is a scaled-up version of word co-occurrence cubes
used to study analogies in natural
language~\cite{TurneyML,KaserKeithLemire2006}.
Each of KJV-4grams' columns contains roughly 8~thousand distinct stemmed words.
% has \cut{redundant nearly a billion rows and}
%4 dimensions (for each position in the 4-gram), each 
%consisting of approximately 8~thousand stemmed words. %8387 for 2nd dim
The Netflix  table has \cut{redundant: over 100~million facts and its}%
4~dimensions:  UserID, MovieID, Date and Rating,
having cardinalities 5, 2~182, 17~770, and 480~189.
%over 32: 1:384056, 2: 17755 3: 2170 4:5  ("2143 order")
Details of how it was obtained from the data downloaded
are given elsewhere~\cite{bda08}.

%sumdi=`../dimensionaldirty  indexfile.bin  |  awk '{ SUM += $3} END { print SUM }' `
\begin{table}
%%%%%%%%%%%%%%%%%%%%%%%Daniel: captions go above in tables!!!
     \caption{Characteristics of data sets used.
    %\fixme{dimsize must be wrong for kjv?: should be about 35k unless dims 3 and 4 are quite a bit smaller.}
    }\label{tab:caractDataSet}
    \centering
    %\begin{scriptsize}
    \begin{tabular}{l|rrrr|} \cline{2-5}
     & rows & cols & $\sum_i n_i$ & size \\ \hline
% owen infers we are using SI  giga, not 2^30 and describes KJV accordingly
    \multicolumn{1}{|l|}{\textbf{Census-Income}} & 199~523     & 42 & 103~419   & 99.1\,MB   \\ 
    \multicolumn{1}{|r|} {4-d projection}        & 199~523     & 4 & 102~609   & 2.96\,MB   \\
    \multicolumn{1}{|l|}{\textbf{DBGEN}}         & 13~977~980  & 16 & 4~411~936   & 1.5\,GB \\ 
    \multicolumn{1}{|r|} {4-d projection}        & 13~977~980  & 4 & 402~544   & 297\,MB  \\
    \multicolumn{1}{|l|}{\textbf{Netflix}}       & 100~480~507 & 4 & 500~146   & 2.61\,GB \\
    \multicolumn{1}{|l|}{\textbf{KJV-4grams}}    & 877~020~839 & 4 &  33~553   & 21.6\,GB  \\ 
\hline
    \end{tabular}
    % \end{scriptsize}

\end{table}

For some of our tests, we chose four dimensions with a wide range
of sizes.
%info from http://kdd.ics.uci.edu/databases/census-income/census-income.names
%0 = age, 91 values
%4 = education, 17 values.
%5 = wage per hour, 1240 values
%16 = capital gains, 132 values
%18 = stock dividends, 1478 values
%24 = "migration code change", 10 values [metadata], really 99800
%22 = "detailed household and family stat, 38 values
For Census-Income, we chose \textit{age} ($d_1$), 
\textit{wage per hour} ($d_2$), \textit{dividends from stocks} ($d_3$)
and a numerical value\footnote{%
The associated metadata says this column should be a
10-valued migration code.} found in the $25^{\textrm{th}}$ position ($d_4$). Their respective cardinalities 
were 91, 1~240, 1~478 and 99~800.
%"more than 32" count: 1:4 2:86 3:41 4:91
For DBGEN, we selected dimensions of cardinality 7, 11, 2~526 and 400~000.
%By convention, 
Dimensions are numbered by increasing
size: column 1 has fewer distinct values. \cut{to the user.  But internally, Owen's scripts still name
things the other way, until it is presentation time.  Crazy!}
%over 32: 1: 152032 2: 2525 3:11 4:7

\notessential{%
\daniel{I am worried that TWEED might be a bad data set because I found
that the gain behaved similarly for all $k$'s when $n$ is small.
We see a big difference between $k=1$ and the $k>1$ for large values of $n$.
See Fig.~\ref{fig:theorygain} to see how $k=1$ differs quite a bit from the
others.}
We also used the small terrorism
data set, TWEED~\cite{tweed}, which has approximately 11~thousand rows. 
We projected
columns  
52 (Target), 
4 (Country),
3 (Year) and
6 (Acting Group), whose cardinalities were
11, 16, 53 and 297, respectively.
% was 287, 53, 16 and 11, respectively. 
\daniel{Any reason to present the dimensions
in reverse cardinality order?}\owen{good catch.}
%Hazel reports using an Action dimension (card 11), and says 6 had 284 values?
%dim card, fewer than 32 discarded: 1:42 2:47 3:12 4:9
}

\newpage
\subsection{Column Ordering}
%\nopagebreak[4]
\label{sec:exper-ordering}
%\daniel{TODO : Daniel must revisit this subsection now that he realized that
%-k, not -K was used.}
%%%%%%%%%%%%%%%%%%%%%%%%%%
% for DBGEN run this:
%  ./heuristic.py 7 11 2526 400000 13977980
%%%%%%%%%%%%%%%%%%%%%%%%%%
% for Census-Income run this:
%  ./heuristic.py 91 1240 1478 99800 199523
%%%%%%%%%%%%%%%%%%%%%%%%%%
% for netflix run this:
%  ./heuristic.py 5 2182 17770  480189 100480507

\cut{%
Expr parts of Section 4.3 of BDA\@.  New plots like those emailed by Owen.
Can we suggest good rules for ordering?  k=1 seems to require different
rules.}

Fig.~\ref{fig:orders} %~and~\ref{fig:orders-synth} 
shows the Gray-Lex index
sizes for each column ordering. 
%\fixme{Was this done with -K or -k? That is,
%do we have a true k-of-N encoding or does the encoding falls back to 1-of-N?
%Ans: -k since I thought you felt that using -K was pointless.  See
%dolap2008experiments/order4.sh}
%Since the dimensions
The dimensions
of KJV-4grams are too similar for ordering to be interesting,
\altnotessential{and we have thus omitted them. }%
\notessential{and thus we 
used TWEED instead in our tests.  %and it is too big for 64 runs...
\daniel{See above remark regarding why TWEED might be evil.}
\owen{If we can explain this, it is a good contribution.
It shows why toy data sets are misleading.  Plus, some people will
want to build bitmap indexes on small relations and we would not
want to mislead them.}
\owen{Another reason for wanting TWEED is that we cannot nicely
show 3 data sets.  For 2, we need an excuse why we don't show the
3rd (what are we hiding?)  We could revive DBLP, though}}%
 For small dimensions, the value of $k$ was lowered using
the heuristic presented in Section~\ref{sec:bitmapIndexes}.
Our results suggest that table-column reordering has a 
significant effect (40\%).  
This does not %directly 
contradict the observation
by Canahuate et.\ al~\cite{pinarunpublished} that \cut{column}%
bitmap
 reordering 
 does not change the %bitmap 
 size much. \cut{did not seem to be important, because they were
reordering bitmaps after encoding, rather than attributes that are
then encoded into an index.}%

%We see that 
The value of $k$ affects which ordering leads to
the smallest index: %and
good orderings for  $k=1$ are frequently bad orderings for
$k>1$, and vice versa.
This is consistent with our earlier analysis (see Figs.~\ref{fig:theorygain} and~\ref{fig:toyorderings}).
%$k=1$ often behaves
%differently from $k>1$. 
\notessential{For TWEED, all values
of $k$ suggested similar orderings (for instance, the second
largest dimension should not go first), although they differed
somewhat about which orders were best, if the third-largest
dimension goes first. 
\daniel{This behavior of TWEED is consistent with what my theoretical
investigations reveal. I say we should drop TWEED and explain that
we are interested in large tables only.}
However, for the other data sets, %thanks, Kamel
}%
%\owen{explain this once daniel's theoretical work done}
%\daniel{At this point, I conjecture that this can be explained by 
%some kind of gain analysis.}
For Netflix and DBGEN, we have omitted $k=2$ for legibility:
it is inferior to $k=1$ for most orderings.
% $k=2$ (and $N$ as small as possible) is inferior to $k=1$
%in index size, for most orderings.  .

\cut{%since it is not yet published, we may not have to do this.
Our earlier conclusions~\cite{bda08} about the correct ordering were
too simple, because they did not account for $k$----except to note that
when $k$ becomes large, we did not believe the issue of ordering would be
as important.  However, Fig.~\ref{fig:orders} clearly disproves this.
}

\begin{figure*}[!t]
\centering
% tweed is too small, but interesting since k=1 similar to other k's.
%\subfigure[TWEED]{\includegraphics[width=0.45\textwidth]{orderings-tweed}\label{fig:tweed-orders}}
  \subfigure[Census-Income]{\includegraphics[width=0.33\textwidth]{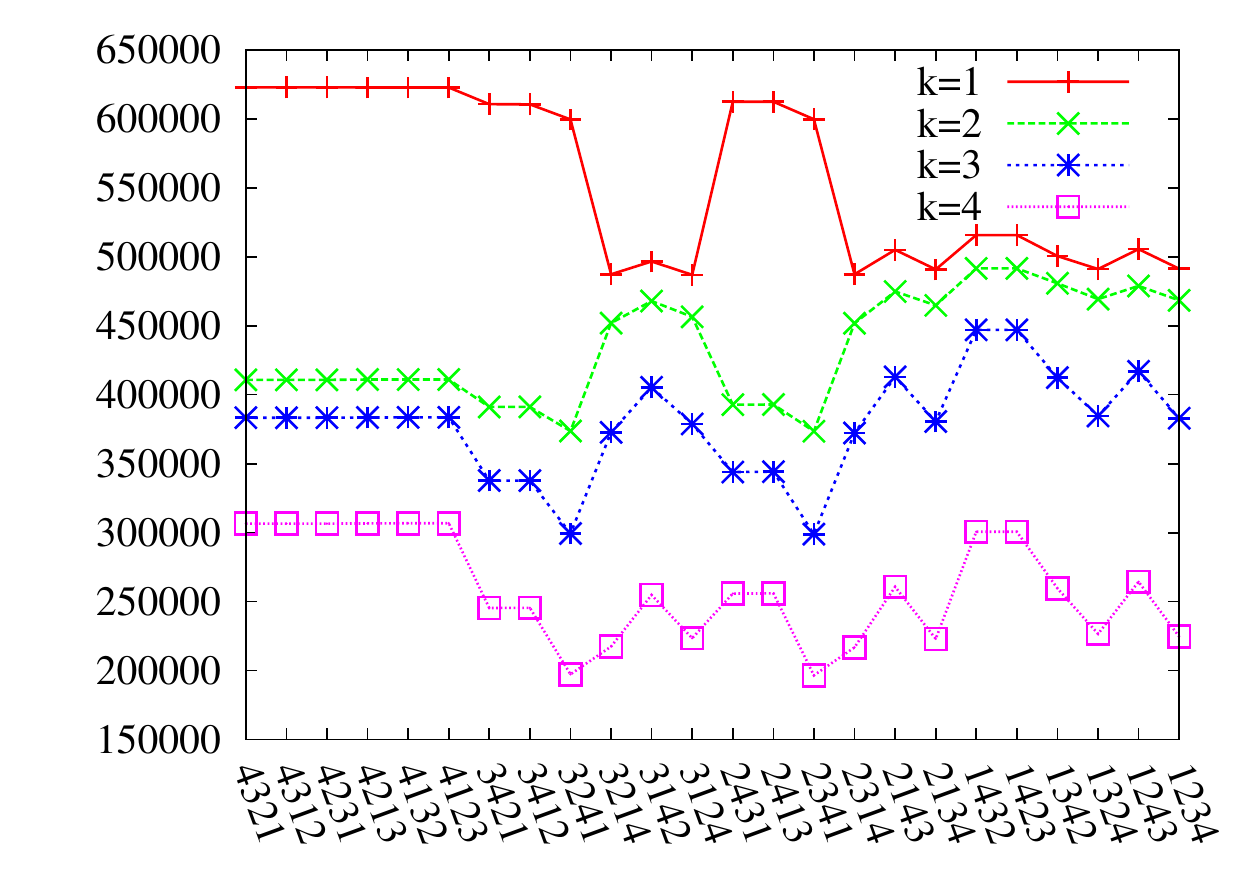}\label{fig:census-order}}
 \subfigure[DBGEN]{\includegraphics[width=0.33\textwidth]{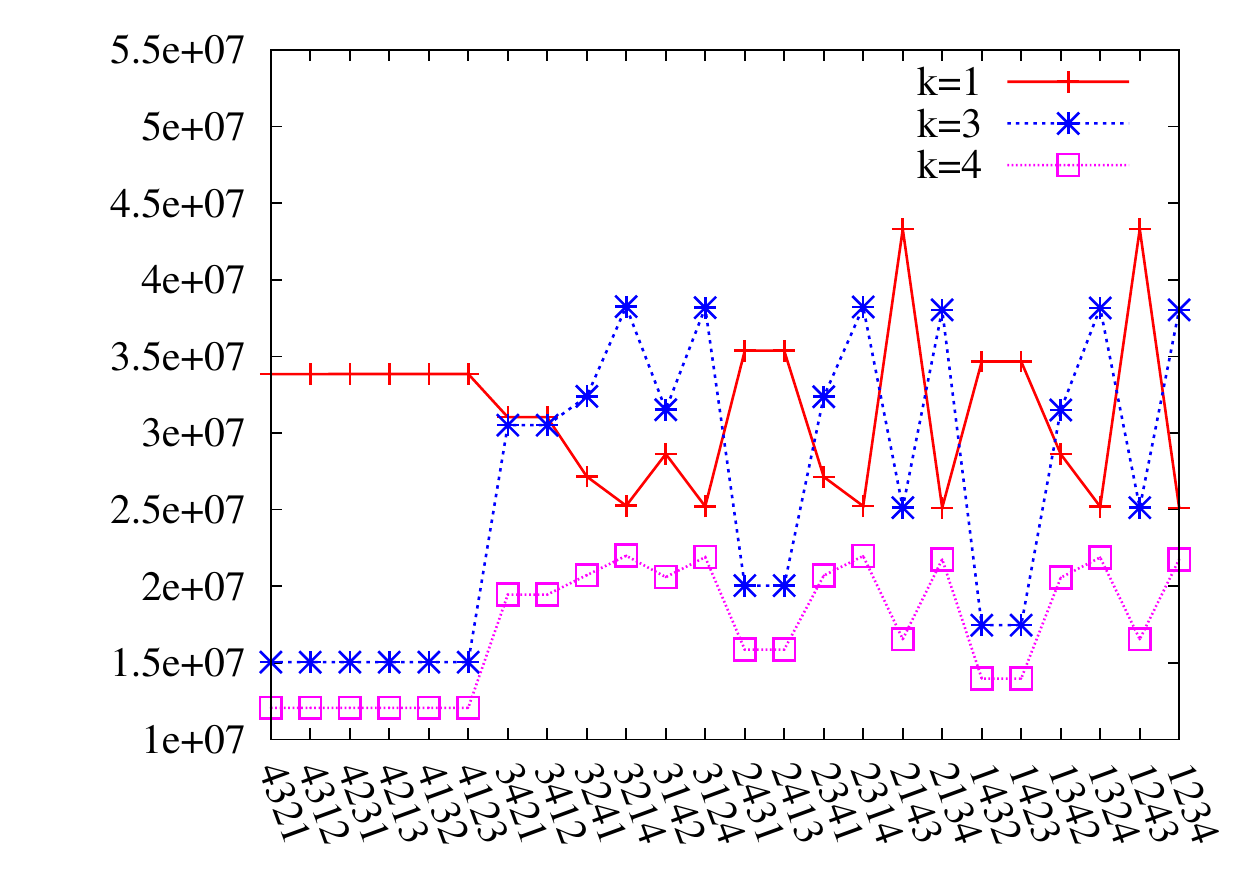}\label{fig:dbgen-orders}}%\\ \quad
  \subfigure[Netflix]{\includegraphics[width=0.33\textwidth]{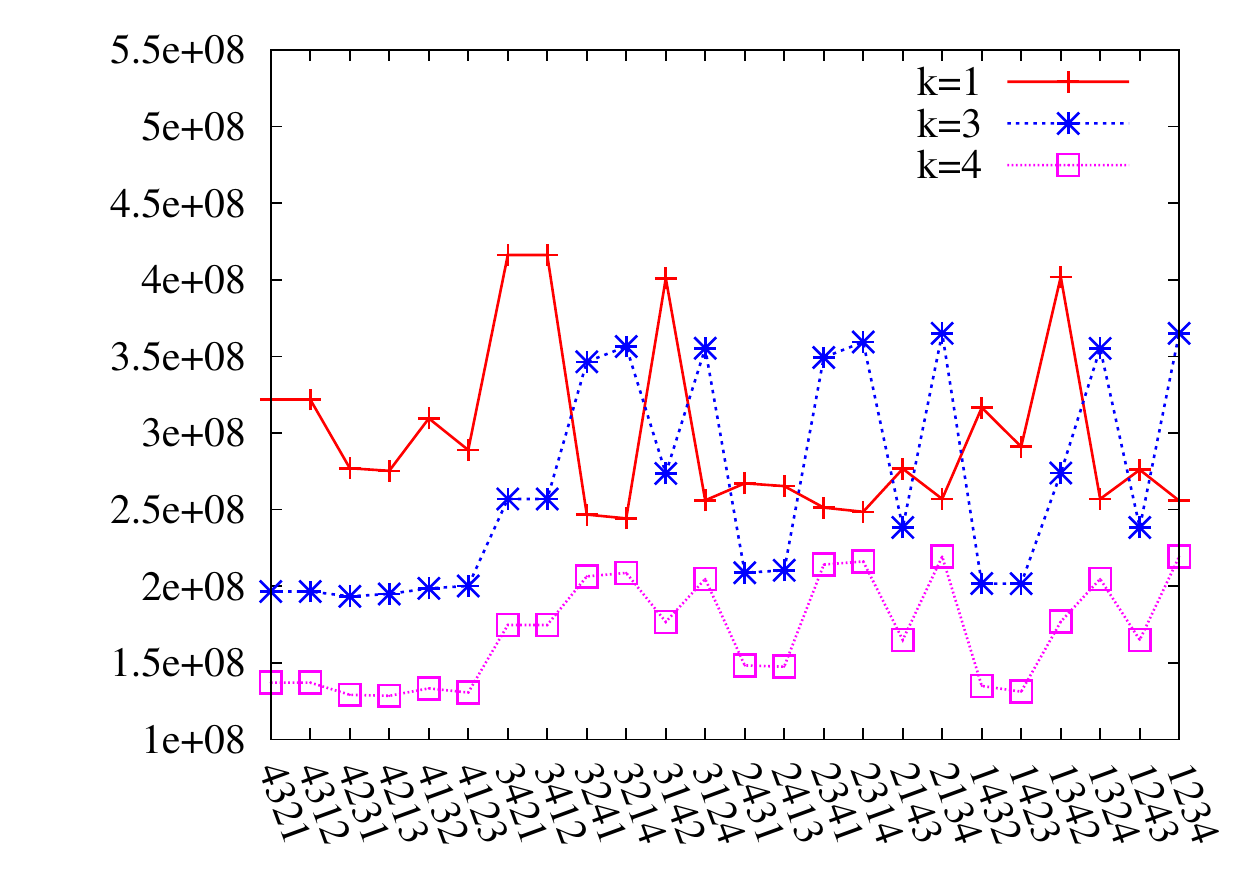}\label{fig:nf-order}}
\caption{Index sizes (words, $y$ axis) on 4-d data sets 
for all dimension orderings ($x$ axis).
\cut{now explanation has to come earlier
Ordering $abcd$ indicates the $a^{\textrm{th}}$ smallest dimension's
being the primary sort key, the $b^{\textrm{th}}$ smallest dimension's
being the secondary sort key, and so forth.}} \label{fig:orders}
\end{figure*}

\cut{%
\begin{figure*}[!t]
\centering
 \subfigure[Zipf 0.6-1.2]{\includegraphics[width=0.45\textwidth]{orderings-zipf2}\label{fig:zipfnarrow-order}}
 \subfigure[Zipf 0.4-1.6]{\includegraphics[width=0.45\textwidth]{orderings-zipf1}\label{fig:zipfwide-order}}
\\ \quad
  \subfigure[Zipf 0.2]{\includegraphics[width=0.45\textwidth]{orderings-zipf-size-s2}\label{fig:zipfs2-order}}
  \subfigure[Zipf 1.0]{\includegraphics[width=0.45\textwidth]{orderings-zipf-size-s10}\label{fig:zipfs10-order}}
\caption{Index sizes for various orders on synthetic data sets.
The \cut{top two have} first one has identical size dimensions with decreasing skew per dimension.
The \cut{bottom two have} second one has dimensions of size 10k, 1000, 100 and 10. Bottom left
has $s=0.2$ and right has $s=1.0$.  \owen{I know we may not
eventually have room for this.  A compromise might be one
skew picture and one dimsize?}} \label{fig:orders-synth}
\end{figure*}

Using synthetic data, we controlled skew and dimension sizes
separately. Fig.~\ref{fig:zipfnarrow-order}~and~\ref{fig:zipfwide-order} 
show index size for data sets with cardinality-100 dimensions.  However,
the dimensions were Zipfian with different amounts of skew (s=1.2,s=1.0,
s=0.8 and s=0.6 for the left figure; s=1.6,s=1.2,s=0.8 and s=0.4 for the
other). \cut{duplicat: For $k=1$, dimensions should be ordered from least skewed to most.
For $k>1$, the opposite ordering should be used.}

Fig.~\ref{fig:zipfs2-order}~and~\ref{fig:zipfs10-order} show
data sets with independent Zipfian dimensions with
skews $s=0.2$ and $s=1.0$, respectively. Both had dimension
cardinalities 10000, 1000, 100 and 10.  However, the high
cardinality dimensions had many values that appeared fewer than
32 times: of the 10k values there were 10 ($s=0.2$) or 324 ($s=1.0$)
occurring at least 32 times.   For $s=0.2$ and $k=1$, orderings 14xx
placing the two practically small dimensions  (1 and 4) first 
did poorly, but orderings 41xx did well.
Otherwise, ordering made little difference, as with
$s=1.0$ and $k=1$.  When $k>1$ best results are obtained by ordering
dimensions by descending cardinality, and the skew seemed less 
important.  It seems that the number of frequent attribute values
is important for $k=1$, but the total number of attributes matters
more for $k>1$. \owen{actually, this may make sense: the relationship of
a value with its bitmap is very straightforward for $k=1$, but for
$k>1$ the rare values interact with the ones that actually occur
frequently (they share bitmaps with them).  Does this somehow
explain anything>}

\cut{not really for real data, just this synth stuff
We see that differences in dimension size become less important as $k$
increases from 2: this may simply be that the number of bitmaps for
the various dimension becomes similar.
}
}

%\paragraph{Evaluation of ordering-choosing \cut{on the real? datasets}}

\cut{% if we have only one heuristic, this is not worthwhile
\begin{table*}
\centering
{%\small
%somewhat a Kamel-theme table
\begin{tabular}{|lr|rrrrrr|}\cline{3-8}
	\multicolumn{2}{c|}{}   &                      &                      & \multicolumn{4}{c|}{\textbf{Ordering heuristic}} \\ 
	\multicolumn{2}{c|}{}   & \textbf{Smallest}    &   \textbf{Largest}   &  1         &      2  &          3  &        4\\ \hline
        TWEED                &k=1 &   4603             &      6109            &            & 4798    &  5738       &          \\
                             &k=2 &   3606             &      6289            &            & 3744    &  5500       &          \\
                             &k=3 &   3160             &      5528            &            & 3367    &  5093       &          \\
                             &k=4 &   3089             &      5125            &            & 3304    &  4798       &          \\ \hline
        DBGEN                &k=1 &    25121855        &   43307577           &            &33847445 &33847445     &          \\
                             &k=2 &    27639092        & 45230670             &            &27639093 &27639093     &          \\
                             &k=3 &    15038603        &38247513              &            &15038604 &15038604     &          \\
                             &k=4 &    12060789        &21983944              &            &12060794 &12060794     &          \\ \hline
        Census-Income        &k=1 &    486997          &  622943              &            & 487319  & 491073      &          \\
                             &k=2 &    373540          &  491597              &            & 451927  & 469307      &          \\
                             &k=3 &    298792          & 447027               &            & 372728  & 384607      &          \\
                             &k=4 &    196451          & 306712               &            & 217397  & 226520      &          \\ \hline
        Netflix              &k=1 &   244226400        & 416186602            &            &321829546&321829546    &          \\
                             &k=2 &   316826482        & 411942449            &            &316826482&316826482    &          \\
                             &k=3 &   193365286        & 365041817            &            &196575343&196575343    &          \\
                             &k=4 &   128451615        & 219329807            &            &137067789&137067789    &          \\ \hline
\end{tabular}                                                                                           
}                                                                                                       
\caption{\label{tab:dimension-ordering}Index sizes (words) with various orderings (4-d data sets).  Choice results
could be expressed in raw numbers or percentage worse than best.}
\end{table*}

See Table~\ref{tab:dimension-ordering} evaluates the success of the
proposed column ordering heuristics.
}

%\owen{words about heuristic}
%remember, we have renumbered our dimensions (4=small) to match
%BDA (4=big)
%\owen{our predictions need to know things about the dimension histograms}
%We note that
 Census-Income's largest dimension is very large ($n_4 \approx n/2$);
 DBGEN has also a large dimension ($n_4 \approx n/35$). Sorting columns
 in decreasing order with respect to  $\min(n_i^{-1/k}, (1-n_i^{-1/k})/(4w-1))$ for $k=1$,
 we have that only for DBGEN   the ordering
 ``2134'' is suggested, otherwise,  ``1234'' is recommended. Thus the heuristic
 provides nearly optimal recommendations. For $k=3$ and $k=4$, the ordering ``1234'' is recommended for
 all data sets: for $k=4$ and Census-Income, this recommendation is wrong.
 For $k=2$ and Census-Income, the ordering ``3214'' is recommended, another wrong 
 recommendation for this data set. Hence, a better column reordering heuristic is needed for $k>1$.
The difficulty appears to be fundamental%, rather than due to
%inaccuracies in the gain model
: when we calculated the gain experimentally,
we found that the best orderings sometimes did not have the dimensions
with highest gain first.  Our greedy approach may be too simple, and it
it may be necessary to know the histogram skews.

 % , but
%no other data sets had an average number of occurrences  below 32.
%However, DBGEN was close: $n/|d_4|$ is 35.
%Ordering heuristic 2 thus suggests orderings 4321 for all data sets except
%for Census-Income, where 3214 is suggested. 
\cut{%
\owen{This discussion can be replaced by the table?}
For TWEED, this is
not the best ordering for any value of $k$, but it is never much
worse than the optimal ordering.   In TWEED, it is always a bad
idea for $d_3$ (Year) to go first.  This dimension's histogram 
\cut{is
found in Fig.~\ref{fig:tweed-year-histo}; we see that there
are}
has
a moderately large number of values, none of which are overwhelmingly
common.  In contrast, the numerically larger dimension $d_4$ has a 
much more skewed histogram and the top 4 values account for more
than half the records.
\cut{histogram shown in Fig.~\ref{fig:tweed-agent-histo}.}  
For Census-Income, the suggested
order is good for $k=1$, but it is a poor choice for $k>1$, where
either ordering 3241 or 2341 would be better.  The overall idea to
avoid orderings with the excessively large dimension ($d_4$) first is good,
however: the first 6 orderings for Census-Income are poor choices for any $k$.
For DBGEN, we hide $k=2$, because for most orderings it consumed
more space than $k=1$.  
For Netflix, the suggested ordering is good (but not optimum) for
$k>1$.  However, it is a poor choice for $k=1$, when the opposite ordering
``1234'' would be a much better choice.

%should be able to run synth datasets with equal sized dims, but
%s=0.5, s=1.0, s=1.5 and s=2.0 and make some believable statements
%about how skew affects things.  Of course, then have to somehow
%try to combine it with factors such as the number of bitmaps
%(affected by |d| and k)

When we counted an attribute value only if it occurs more than 32 times,
the suggested ordering for Netflix and DBGEN is (still) ``4321,''
\notessential{for TWEED it is 3421 (although 4321 is a near second), }
for Census-Income is  1324.
This is a poor choice for \notessential{TWEED and} Netflix, a good choice for $k=1$ or
$k=4$ on Census-Index, but a poor choice for $k=2$ or $k=3$. 
For DBGEN, it is a good choice for $k>1$.

For $k=1$ on Netflix, we see that some orderings beginning with $d_3$
(specifically, 3214 and 3241)
are among the best, whereas  others (3412 and 3421) are especially bad.
Since dimension 1 (rating) is tiny, functionally dependent on
the the other dimension,  and there are many occurrences of each value,
it may be sensible to ignore this value.    Then, we see that ordering
324 is good, but 342 is bad.
\owen{badly needs an explanation.  Maybe, tuples of (3,2) are nicely 
distributed?}
DBGEN, a synthetic data set, performed well for $k>1$ as long as the
largest dimension was first: the ordering of the other dimensions did
not matter.  However, for $k=1$, the reversed ordering, ``1234,'' was
among the best.

\owen{Netflix k=1.  34xx is bad but 32xx is good.  Similarly, DBGEN 
1234/2143 good but 1243/2143 bad. Weird?}
}

\subsection{Sorting}

On some synthetic Zipfian tests, we found
a small improvement 
(less than 4\% for 2~dimensions) by using Gray-Lex coding in preference
to Alpha-Lex~\cite[Fig.~3]{bda08}.  On other data sets, Gray-Lex
either had no effect or a small positive effect.   Therefore, our
current experiments  do not include Alpha-Lex, with the exception
that
we experimentally evaluated how sorting affects the EWAH
compression of individual columns.  Whereas
sorting tends to create runs of identical values in the first
columns,  the benefits of sorting are far less
apparent in later columns, except those \cut{that are} strongly
correlated with the first few columns.  
%If the \cut{fact} table has
%many columns,  the benefits are small and diminishing (bad
%translation, maybe?).  
For Table~\ref{tab:sizesortcolumnorderd10},
 we have sorted
projections of Census-Income and DBGEN onto 10~dimensions
$d_1\ldots{}d_{10}$ with $n_1<\ldots<n_{10}$.  %Note that the
(The
dimensions $d_1 \ldots d_4$ in this group are different from %do not correspond
%to 
the dimensions $d_1 \ldots d_4$ discussed earlier.)
We see that if we sort from the largest column ($d_{10}\ldots{}d_1$),
at most 3 columns benefit from the sort, whereas 5 or more columns benefit
when sorting from the smallest column ($d_1\ldots{}d_{10}$).
%\owen{Since the table is for $k=1$, the importance of the early
%dimensions is magnified, I think, since you cannot benefit from
%the nice packing properties of groups of GC-adjacent codes}
%\daniel{This is supposed to be an experimental section, so unless we
% have some numbers to quote, I would not add more to the discussion.}
%\fixme{omission of this table and discussion would make paper fit into required 8pgs}
%\cut
{%
\begin{table*}%[!t]
%%%Daniel: captions go above tables!
\caption{Number of 32-bit words used for different unary indexes when the
table was sorted lexicographically (dimensions ordered
by descending cardinality, $d_{10}\ldots d_1$, or by ascending
cardinality, $d_1\ldots d_{10}$).
}\label{tab:sizesortcolumnorderd10}
\centering
%\begin{scriptsize}
\begin{tabular}{|c|rrrr|rrrr|}\cline{2-9}
	\multicolumn{1}{c|}{}  & \multicolumn{4}{c|}{\textbf{Census-Income}}  & \multicolumn{4}{c|}{\textbf{DBGEN}} \\ 
	\multicolumn{1}{c|}{} & cardinality &unsorted &\multicolumn{1}{c}{$d_1 \ldots d_{10}$} & \multicolumn{1}{c|}{$d_{10} \ldots d_1$} & cardinality & unsorted & \multicolumn{1}{c}{$d_1 \ldots d_{10}$} & \multicolumn{1}{c|}{$d_{10} \ldots d_1$} \\ \hline
$d_1$ & 7 & 42~427 & 32 & 42~309 & 2 & 0.75$\times 10^6$ & 24 & 0.75$\times 10^6$ \\
$d_2$ & 8 & 36~980 & 200 & 36~521 & 3 &  1.11$\times 10^6$ & 38 & 1.11$\times 10^6$ \\
$d_3$ & 10 & 34~257 & 1~215 & 28~975 & 7 & 2.58$\times 10^6$ & 150 & 2.78$\times 10^6$ \\
$d_4$ & 47 & 0.13$\times 10^6$ & 12~118 & 0.13$\times 10^6$ & 9 & 0.37$\times 10^6$ & 100~6 & 3.37$\times 10^6$ \\
$d_5$ & 51 & 35~203 & 17~789 & 28~803 & 11 & 4.11$\times 10^6$ & 10~824 & 4.11$\times 10^6$ \\
$d_6$ & 91 & 0.27$\times 10^6$ & 75~065 & 0.25$\times 10^6$ & 50 & 13.60$\times 10^6$ & 0.44$\times 10^6$ & 1.42$\times 10^6$ \\
$d_7$ & 113 & 12~199 &  9~217 & 12~178 & 2~526 & 23.69$\times 10^6$ & 22.41$\times 10^6$ & 23.69$\times 10^6$ \\
$d_8$ & 132 & 20~028 & 14~062 & 19~917 & 20~000 & 24.00$\times 10^6$ & 24.00$\times 10^6$ & 22.12$\times 10^6$ \\
$d_9$ & 1~240 & 29~223 & 24~313 & 28~673 & 400~000 & 24.84$\times 10^6$ & 24.84$\times 10^6$ & 19.14$\times 10^6$ \\
$d_{10}$ & 99~800 & 0.50$\times 10^6$ & 0.48$\times 10^6$ & 0.30$\times 10^6$ & 984~297 & 27.36$\times 10^6$ & 27.31$\times 10^6$ & 0.88$\times 10^6$ \\ \hline
total  & - & 1.11$\times 10^6$ & 0.64$\times 10^6$ & 0.87$\times 10^6$  & - & 0.122$\times 10^9$ & 0.099$\times 10^9$ & $0.079\times 10^9$ \\ \hline
\end{tabular}
%\end{scriptsize}
\end{table*}
}

%histogram sorting

\paragraph{Lexicographic sorting}

\notessential{%
Fig.~\ref{fig:kjv-size-vs-lines}~and~\ref{fig:kjv-time-vs-lines} 
show the sizes of simple bitmap indexes for various prefixes of KJV-4grams. 
It also shows the time to create these indexes.
\owen{It might be easier to grok as a log-log plot.  I ended up looking
at a nolog-nolog and log-log version while trying to understand it}
We see that the full index is nine times bigger if unsorted.  We also
see the sorted index appears to grow sublinearly, probably due to repeated
or almost repeated rows.  It should be faster to query a smaller index.
However, the time to sort and then create this smaller index is more than
the time to create the unsorted index.  The additional time is less than
the time to sort, and all times seem to scale linearly with the number
of rows. \daniel{We might stress here that this is external-memory sorting
since the machine has 2\,GiB of RAM\@ which is less than the size of the data set.}
}

Constructing a simple bitmap index (using Gray-Lex) over KJV-4grams took
approximately 14,000~seconds or less than four hours.  %Of this, nearly 
Nearly half (6,000\,s) of the time
was due to the \texttt{sort} utility, since the data set is much larger 
than the machine's main memory (2\,GiB).  Constructing an unsorted 
index is faster (approximately 10,000\,s), but the index is about 9~times
larger.  

To study scaling, we built indexes from prefixes
of the full data set.  We found construction times increased linearly
with index size for $k=1$, whether or not sorting was used.  For 
$1\leq k \leq 4$, index size increased linearly with the prefix size for
unsorted data.  Yet with sorting, index size increased sublinearly.
As new data arrives, it is increasingly likely to fit into existing
runs, once sorted.   

%\newpage
Table~\ref{tab:index-sizes} shows index sizes for our large data sets,
using  Gray-Lex orderings\notessential{, Frequent-Component}  and Gray-Frequency. 
\notessential{For Gray-Lex and Gray-Frequency,  dimensions }%
\altnotessential{Dimensions }%
 were ordered from the largest to the smallest (``4321'')
except for Census-Income where we used the ordering ``3214''.
%by our column reordering % Heuristic 2
%\owen{which says 4321 except 3214 for CI}.  
We observed that KJV-4grams did not benefit
in index size for $k=2$.
This data set has many very long runs of identical attribute values
in the first two dimensions, and the number of
attribute values is modest, compared with the number of rows.
This is ideal for 1-of-$N$.

Gray-Frequency yields the  smallest indexes in Table~\ref{tab:index-sizes}.
Frequent-Component is not shown in the table.  On Netflix for $k=1$ it
outperformed the other approaches by 1\%, and for DBGEN it was only slightly
worse than the others. But in all other case on DBGEN, Census-Income and Netflix,
it lead to indexes 5--50\% larger. 
% \owen{we did not test kjv.  Maybe we
% should; it seems like a case where it might be okay.} 
% \daniel{Even if it were ok, I would not be excited.}

\notessential{%
\begin{figure}
\centering
\includegraphics[width=.49\textwidth]{sizevsnboflines-alpha}
\caption{\label{fig:kjv-size-vs-lines} Bitmap index size for various
prefixes of the KJV-4grams table, $k=1$, shuffled table
and lexicographic table order.}
\end{figure}

\begin{figure}
\centering
\includegraphics[width=.49\textwidth]{timevsnboflines-alpha}
\caption{\label{fig:kjv-time-vs-lines} Times to create an index,
for various prefixes of the KJV-4grams table. $k=1$}
\end{figure}

\begin{figure}
\centering
\includegraphics[width=.49\textwidth]{sizevsnboflines-grayall}
\caption{\label{fig:kjv-size-vs-lines-gray} Bitmap index size for various
prefixes of the KJV-4grams table, $k>1$, shuffled table
and Gray code ordering.}
\end{figure}
}

\begin{table}
%%%%%%%%%%%%%%%%%%Daniel: captions go above tables!
\caption{ \label{tab:index-sizes} Sizes of EWAH indexes (32-bit words) for various sorting methods.}
\centering\small
\begin{tabular}{|cr|rrr\notessential{r}|}\cline{3-5} %5 if Wgt-Interleave
\multicolumn{1}{c}{} & 
\multicolumn{1}{c|}{} & Lex unsorted & Gray-Lex & Gray-Freq\notessential{& Freq-Component} \\ \hline
Census-Income &  $k=1$ & $8.49\times10^5$  &\cut{  487319  }$4.87\times10^5$& \cut{487464   }$4.87\times10^5$\notessential{& 617381    }    \\
(4d)  &      2    & $9.12\times10^5$   &\cut{  451927  }$4.52\times10^5$& \cut{435595   }$4.36\times10^5$\notessential{& 426312    }    \\
      &      3    & $6.90\times10^5$   &\cut{  372728  }$3.73\times10^5$& \cut{327939   }$3.28\times10^5$\notessential{& 377165    }    \\
      &      4    & $4.58\times10^5$   &\cut{  217397  }$2.17\times10^5$& \cut{197695   }$1.98\times10^5$\notessential{& 329624    }    \\ \hline
DBGEN &      1  &  $5.48\times10^7$  &\cut{  33847445}$3.38\times10^7$& \cut{33847116 }$3.38\times10^7$\notessential{&  33851895}     \\
(4d)  &      2    &  $7.13\times10^7$  &\cut{27639093  }$2.76\times10^7$& \cut{27441584 }$2.74\times10^7$\notessential{&  27816958}     \\
      &      3    &  $5.25\times10^7$  &\cut{15038604  }$1.50\times10^7$& \cut{14977582 }$1.50\times10^7$\notessential{&  15542383}     \\
      &      4    &   $3.24\times10^7$ &\cut{ 12060794 }$1.21\times10^7$& \cut{11906432 }$1.19\times10^7$\notessential{&  12877534}     \\ \hline
Netflix &      1   &\cut{619959382} $6.20\times10^8$   &\cut{321829546 }$3.22\times10^8$& \cut{318959993}$3.19\times10^8$\notessential{& 313900325 }    \\
      &      2    & \cut{827496560} $8.27\times10^8$   &\cut{316826482 }$3.17\times10^8$& \cut{242676792}$2.43\times10^8$\notessential{& 388117623 }    \\
      &      3    &  \cut{573296926} $5.73\times10^8$  &\cut{196575343 }$1.97\times10^8$& \cut{149062457}$1.49\times10^8$\notessential{& 226287832 }    \\
      &      4     & \cut{341693411}$3.42\times10^8$  &\cut{137067789 }$1.37\times10^8$& \cut{113820043}$1.14\times10^8$\notessential{& 152122246 }    \\ \hline
KJV-4grams &      1 & $6.08\times10^9$ &\cut{ 668244925}$6.68\times10^8$& \cut{667570961}$6.68\times10^8$\notessential{& n/a       }    \\
      &      2    &  $8.02\times10^9$  &\cut{ 992824611}$9.93\times10^8$& \cut{729015073}$7.29\times10^8$\notessential{& n/a       }    \\
      &      3    & $4.13\times10^9$    &\cut{ 830621456}$8.31\times10^8$& \cut{577319979}$5.77\times10^8$\notessential{& n/a       }    \\
      &      4    &    $2.52\times10^9$ &\cut{ 638934508}$6.39\times10^8$& \cut{500834200}$5.01\times10^8$\notessential{& n/a       }    \\ \hline
\end{tabular}
\end{table}

%\notessential
\cut{
% This is the experimental stuff do go with the earlier section's
% optional interleaving
\paragraph{Histogram-Oblivious Interleaving}

When tested, the scheme for interleaving bitmaps (see the end of 
Section~\ref{sec:dim-order}) worked poorly.
To avoid issues with from differing
sized dimensions, we experimented with synthetic cubes with
91 attributes per dimension distributed with Zipfian ($s=1.0$)
and 5000 rows (apprx 4000 distinct rows). 
A Gray-Lex bitmap index was generated, then bitmaps
were interleaved, and finally the result was GC sorted.

For $k=3$ and $d=3$, interleaving increased the size of
the index by an average of  5.4\% (taken every 1000 cubes). 
%calculated by interleave_nogood.sh
For $k=2$ and $d=3$, the average index size increased
by 5.3\%.
\notessential{
On TWEED, we saw a 44\% size increase \owen{hmm which k}
from interleaving, compared with Gray-Lex with largest dimension
first.}
% for -G code option and  sort -t, -k3,3 -k1,1 -k2,2 -k4,4 tweed3,4,6,52.csv
% 3842 gray-lex and 5517 interleave
}
\cut{
The Frequent-Component approach seemed to work poorly 
(see Table~\ref{tab:index-sizes}).  Owen thinks he may
have run the tests incorrectly because he was not expecting this.
\daniel{I got the same type of results. Canahuate et al.~\cite{pinarunpublished}
reported that their interleaving was not very good at
increasing the compression, so we just confirm their results.
Also, the type of interleaving we do is akin to $k=1$ bitmap
reordering. There is no reason to believe it would work well
for $k>1$. So the fact that it does not make things too much
worse for $k=1$, but make things worse for $k>1$ should not
come as a surprise. }
}

\subsection{Queries}

%%%%% I see this whole thing as mostly trivial:
%\owen{Question to Daniel: wasn't it one of Pinar's conclusions that
%either row or column reordering was mostly useful because of its
%impact on query times?}
%\daniel{In~\cite{pinarunpublished}, they say that bitmap reordering
%evens out the performances by generating more even bitmap compression.
%In~\cite{bda08}, we report that lex sorting will enhance the performance
%of the first columns,  but not necessarily of the whole index.}
%
%\looseness -1
We timed equality queries against our 4-d bitmap indexes, and the
results are shown in Fig.~\ref{fig:queries}.
Queries were generated by choosing attribute values uniformly
at random and the figures report average times for such queries.
We made 100 random choices per column
for KJV-4grams when $k>1$.
For DBGEN and Netflix, we had 1,000 random choices per column
and 10,000 random choices were used for Census-Income
and KJV-4grams ($k=1$). For each %Within the block for a 
 data set, 
%we show its dimensions (leftmost tick
we give the results per column (leftmost tick
%column in a block
 is the 
column used as the primary sort key, next 
%column  
tick
is for the
secondary sort key, etc.).
%\owen{columns in figure vs columns in table, horribly awkward
%and confusing}

From Fig.~\ref{fig:randqueries-times}, we see that simple bitmap
indexes always yield the fastest queries.  The difference caused
by $k$ is highly dependent upon the data set and the particular column
in the data set. However, for a given data set and column, with only
a few small exceptions, query times increase significantly with $k$.
For DBGEN, the last two dimensions have size 7 and 11, whereas
for Netflix, the last dimension has size 5, and therefore,
they will never use a $k$-value larger than 2: their speed is mostly
oblivious to %the setting 
$k$.

In Section~\ref{sec:multi}, we
predicted that the query time would grow with $k$ as $\approx (2-1/k)n_i^{-1/k}$:
for the large dimensions such as the largest ones for DBGEN (400k)
and Netflix (480k), query times are two orders of magnitude slower for $k=2$ as opposed
to $k=1$, and four orders of magnitude slower for $k=4$. Thus, our model exaggerates
 the differences
by about an order of magnitude. The most plausible explanation is that
query times are not directly proportional to the bitmap loaded, but also include
a constant factor.
%DBGEN: 7, 11, 2 526 and 400 000.
% Census-Income : 91 1240 1478 99800 199523
%netflix :5 2182 17770  480189 100480507
%DBGEN seems to have a different overall behaviour, wherein the last
%two columns are similar for all $k$.  However, DBGEN is a synthetic
%data set with uncorrelated dimensions and the last two columns do not benefit from sorting. %not reflect important correlations between
%dimensions. 

% \daniel{What follows is not very clear. Especially since, for the last
% two columns, I believe Owen was mistaken to think that the $k$ values
% were similar: that's because it was all the same $k$ value! Drop the rest
% of the para.?}
% Owen: Yup, I was confused.
%
%  Moreover, query selectivity is a factor.  The cases
% where all $k$ values were similar (DBGEN's last two columns and Netflix's
% last column) corresponded to the cases when roughly a million answers
% or more would be found.
% This may not be the entire explanation \owen{prob remove, or
% at least rewrite} since for KJV-4grams we
% returned at least 100,000 answers quickly from its first
% three columns. \owen{The skewed nature of word freqs means my 100 queries
% each do NOT have similar average selectivities shown; a sample size
% of 100 is probably inadequate for that.}

%Whereas Fig.~\ref{fig:randqueries-times} used Gray-Frequency indexes, in
%Fig.~\ref{fig:randqueries-ratio} we see the effects of generating
%indexes from  unsorted tables.  
Fig.~\ref{fig:randqueries-times-unsorted} and~\ref{fig:randqueries-times}
show the equality query times per column before and after sorting the tables.
Sorting improves query times most for larger values of $k$: for Netflix, sorting improved
the query times by  at most 2 for $k=1$, 
% at most 40 for $k=2$ and at most 90 for $k=3$.  (new 1x11 results change this)
at most 40  for $k=2$ and at most 140 for $k=3$; indexes with $k>1$ benefit
from sorting even when there are no long runs of identical values (see Subsection~\ref{sec:sorting-rows}).  (On the first
columns, $k=3$ usually gets the best improvements from sorting.)
\owen{Or can we appeal to the model?}
%\daniel{Daniel killed a mysterious sentence here.}
%%%% I don't understand next bit:
%%The penalty for unsorted indexes was
%%typically much higher for $k>1$ than for $k=1$, where within a column,
%%only the grouping of identical attributes affects index size.
%\owen{If you look at (omitted) randqueries-ratio.pdf, you will see that
%the k=1 markers are almost always below the others. You could tease this
%out by comparing the new 5a and 5b, but it is harder.
% So, why is k=1
%less sensitive to sortedness?  Ans: k>1 gets to use the actual sorted
%order with gray codes.  Whereas, k=1 is oblivious to the actual values
%of them items in the columns, and it is sensitive only to the lengths of
%identical runs.}  
Synthetic DBGEN showed no
significant speedup from sorting, beyond its large first column.  Although
Netflix, like DBGEN, has a many-valued column first, it shows a benefit
from sorting even in its third column: in fact, the third column benefits
more from sorting than the second column.  \owen{skew? correlations?}
The largest table, KJV-4grams, benefited most from the sort: while queries
on the last column are twice as fast, the gain on the first two columns 
ranges from 20 times faster ($k=1$) to almost 1500 times faster ($k=3$).
\owen{using
\includegraphics[width=0.4\textwidth]{randqueries-ratio} 
}%

%\daniel{A natural question to ask at this point is whether the columns were
%compressed in a similar manner. That is, does the query times measure something
%different than the compression ratio? I see no evidence to this effect. }

We can compare these times with the expected amount of data scanned per query.
This is shown in Figure~\ref{fig:queries-sizes}, and we observe 
reasonably close
agreement 
%% Daniel thinks that this is a good agreement.
%%(\fixme{within an order of magnitude in most cases excluding $k=1$, 
%%where the bitmaps are tiny; is that ``reasonably
%%close?''}
\owen{see\\\includegraphics[width=0.4\textwidth]{querysizetime-srt})}%
between most query times and the expected sizes of the bitmaps 
being scanned. % One exception is the first dimension on KJV-4grams.
Exceptions include the first dimension on KJV-4grams and some cases
where the bitmaps are tiny. This discrepancy might be explained by the 
retrieval of the row IDs from the compressed bitmaps: long runs of 1x11
clean words must be converted to many row IDs. \daniel{My 
code retrieves the row IDs by iterating over the non-zero words. In the
case where we have compressed runs of 1x11, I expect that the size of the
bitmap is no longer a good predictor of the query time.}
%\owen{the log scale is a bit deceptive.  I think if we plotted a ratio of
%kB expected to be scanned and the query time, it would not be a constant.}
%\owen{except for tiny k=1 bitmaps.  BTW, Kamel, what are the units?  We have
%values much less than 0, and every compressed bitmap must have size > 1 word.}
%\kamel{I used MiB for each bitmap size. ((k*(\$5*4)/(1024**2))/\$4), \$5 is size of each bitmap 
%in words and \$4 number of bitmaps}
% \daniel{Somewhere we may want to allude to the limitation of our current
% C++ code where we iterate over non-zero words, which tends to penalize
% denser bitmaps
% ($k>1$ and $(k=1) \land \textrm{small} n_i$). This may be the explanation. But if
% we make this point, then we have to watch out because we said that we ran 
% queries in time $O(|B1|+|B2|)$ which not quite true. It is only true
% for sparse bitmap indexes. \sout{Daniel will fix it though.} Daniel fixed
% it on June 27. Just call querybitmap without the -s flag. }

\begin{figure*}[!t]
\centering
 \subfigure[Query times over unsorted \cut{(Gray-Lex)} indexes]{\includegraphics[width=0.8\columnwidth]{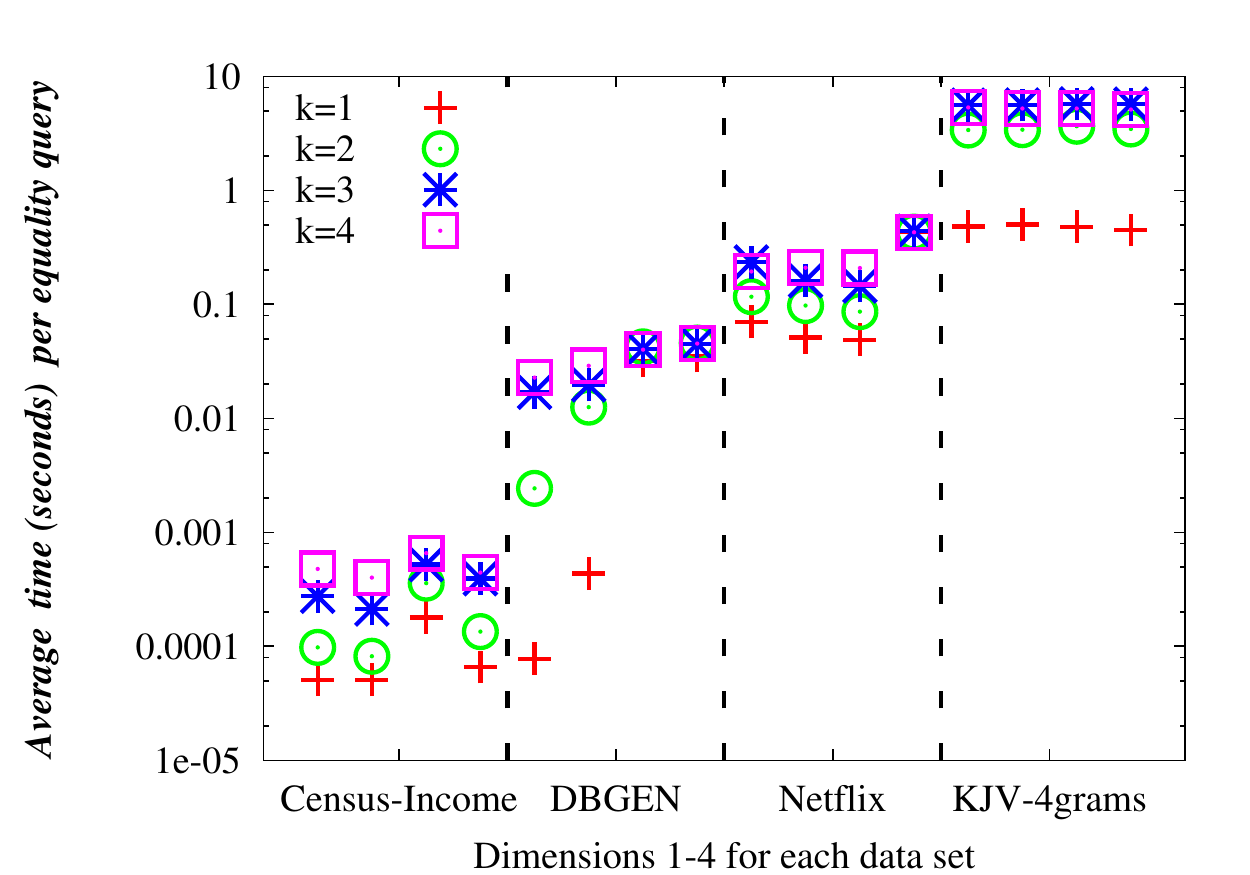}\label{fig:randqueries-times-unsorted}}
 \subfigure[Query times over sorted (Gray-Lex) indexes]{\includegraphics[width=0.8\columnwidth]{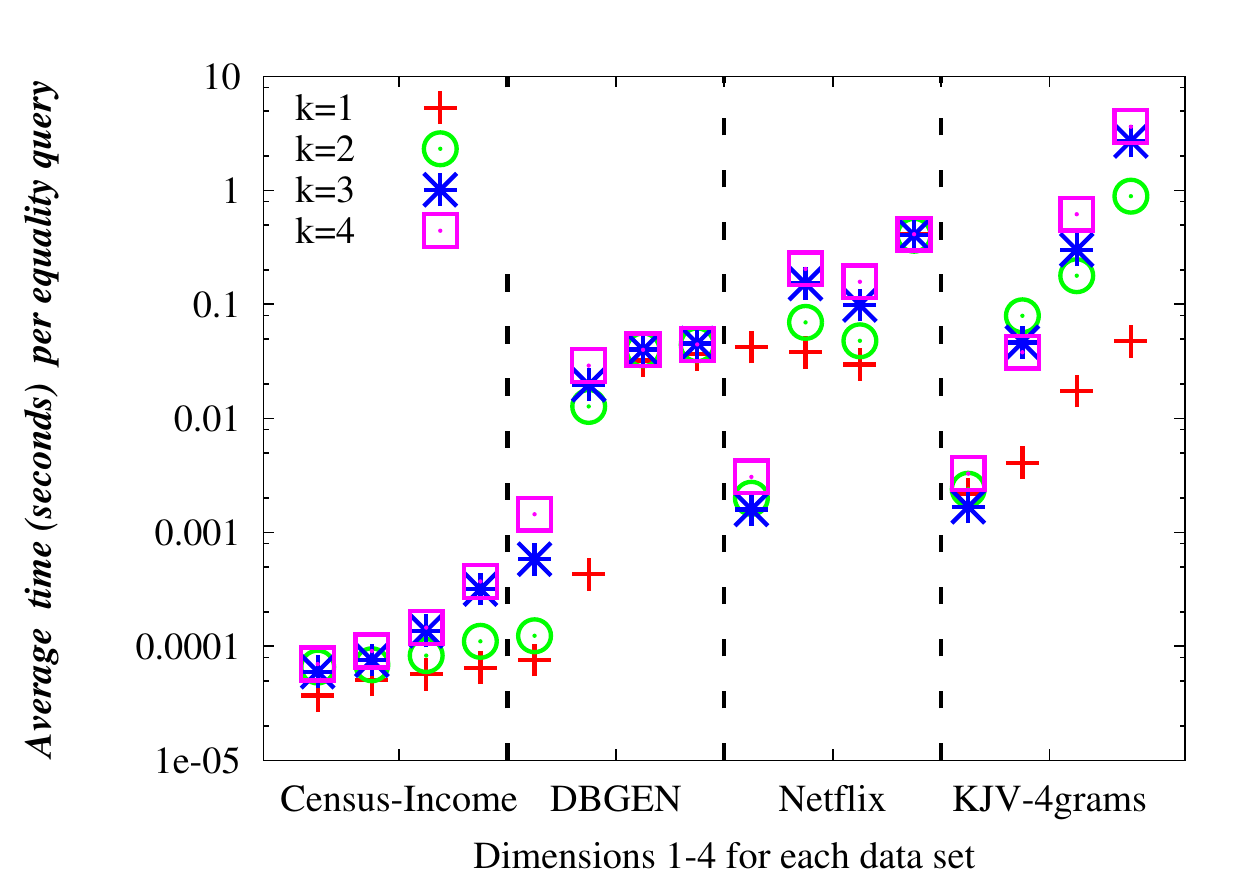}\label{fig:randqueries-times}}
%   \subfigure[Query speedup due to sorting]{\includegraphics[width=0.49\textwidth]{randqueries-ratio}\label{fig:randqueries-ratio}}
\caption{Query times are affected by dimension, table sorting and $k$.
%\owen{We need a gnuplot guru.  There are 4 subcolumns within each dataset's
%block.  They could have subtics (mxtics, I guess).  From left to right,
%these subcolumns are primary sort dimension, secondary sort dimension,
%tertiary sort dimension and 4th sort dimension.}
%\daniel{Daniel did not like the ratio figure, he prefers to compare unsorted/sorted visually. }
%\owen{Owen does not care, one way or the other.}
}\label{fig:queries}
\end{figure*}

\begin{figure} %\begin{figure*}[!t]
\centering
% \subfigure[Expected size of the compressed bitmaps scanned, no sorting]{\includegraphics[width=0.49\textwidth]{ksb-unsorted}\label{fig:randqueries-sizes-unsorted}}
% \subfigure[Expected size of the compressed bitmaps scanned, Gray-Lex]
{\includegraphics[width=0.8\columnwidth]{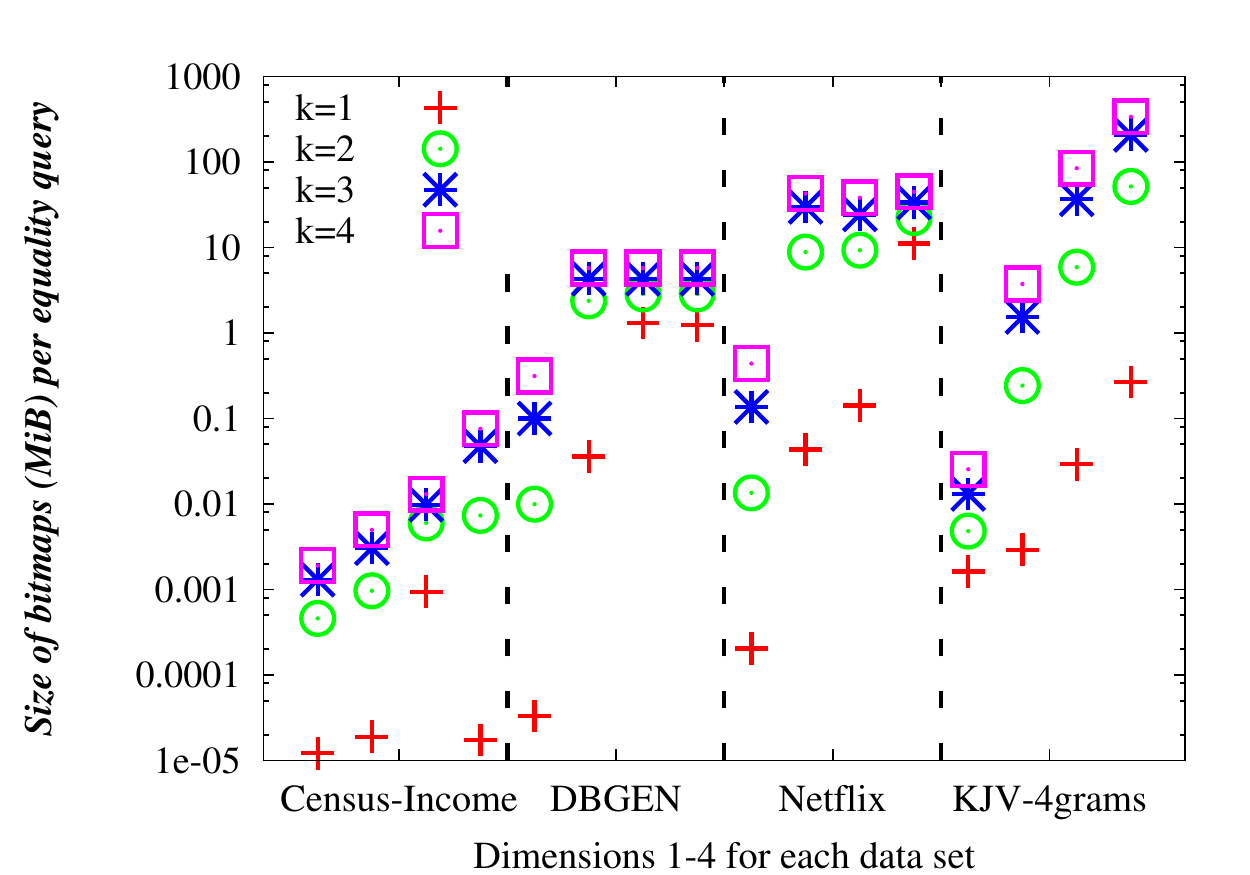}%\label{fig:randqueries-sizes}
}
\caption{Bitmap data examined per equality query.% by dimension\cut{, table sorting} and $k$. Sorted data.
}\label{fig:queries-sizes}
\end{figure}%\end{figure*}

\section{Guidelines for k}

Our experiments indicate that
simple ($k=1$) bitmap encoding is preferable when storage space and
index-creation time are less important than fast equality queries. % answering equality 
%queries quickly.  
 The storage and index-creation penalties are kept
modest by table sorting and Algorithm~\ref{algo:owengenbitmap}.

%In some cases, 
Space requirements can be reduced by choosing $k>1$,
although Tab.~\ref{tab:index-sizes} shows that this approach has risks
(see KJV-4grams). For $k>1$,
% Owen seems to use "Once" as a substitute for "For"? 
%  Once $k>1$, 
% Owen: "Once k>1" gets at "If you had (foolishly) committed to set k>1, then we can help you a bit
% Dictionary "as soon as, if ever, when"
% "For" is less subtle but would be okay.
we can gain additional index
size reduction 
%\owen{and presumably query speed, but we did not test}
at the cost of longer index construction 
%\owen{we need data from logs}
%\daniel{I did not keep logs, though the overhead is mostly due to the
%insertion of the weights, and that's a linear cost: hence, we could 
%compute a percentage using a small data set. On census-income, I get
%that it takes 14 times longer to Freq-Sort than to Sort. I vote that
%we do not report this number.}
by using Gray-Frequency rather than Gray-Lex.  \notessential{The amount of space
gain is difficult to predict. }

If the total number of attribute values is
small relative to
the number of rows, then we should 
first
try the $k=1$ index.  Perhaps the data set resembles
KJV-4grams.  Besides yielding faster queries, 
the $k=1$ index may be smaller.

% $k=1$ is so much better for queries and perhaps not too much worse
% for size: consider it first?  Maybe our take-away lesson is that
% index size can be reduced somewhat by $k>1$, esp. if our clever tricks
% are used, but in J Gray's world of limitless storage, so what?
% Our conclusions would have been much different if we had not
% found the non-naive implementation of index creation 
% described in Section~\ref{sec:compression}; otherwise the creation of
% $k=1$ indexes might have been too high for many tables.
% \owen{Kamel's results seem to indicate that it is faster to build
% an unsorted index for kjv even though it is 10 times bigger.
% We need a reason for small indexes (at least for $k=1$).
% We need query times for both
% sorted and unsorted indexes; hopefully, we can then claim that 
% there is a speed advantage for sorted indexes, and JGray's ghost
% will not haunt us.  This will motivate also for smart sorting,
% since this is a factor for all values of $k$ (except wgt interleaving,
% which ultimately did not seem good, at least as I have run it).}
% \owen{We may have to recommend against $k>1$ for most cases,
% because we have given this wonderful implementation hack that
% makes $k=1$ indexes fast and practical over even huge data sets
% (still think it may be good to have $k=1$ for first dimension
% and $k>1$ for subsequent dims: future work).  
% So then, the sales pitch is that \textbf{if} you must use $k>1$,
% then here's what it might cost you in query time, here's how
% you can do it better (Gray-Lex or Gray Frequency ideas and implementation).
% }

\section{Conclusion and Future Work}\label{sec:Conclusion}

We showed that while sorting improves bitmap indexes,
we can improve them even more (30--40\%) if we know the number of distinct values
in each column. For $k$-of-$N$ encodings with $k>1$, even further gains (10--30\%) are
possible using the frequency of each value. 
Regarding future work, 
%%%%%%% This has been stated in the paper already in the appropriate section:
%our analysis assumes a uniform $k$-of-$N$ encoding for all
%dimensions which is too restrictive.
%Also, 
the accurate mathematical modelling of compressed bitmap indexes
 remains an open problem.

%One value of $k$ for all dimensions is too restrictive (see the
%hackaround for reducing $k$ for small dims).  We should relax this
%and adapt for mixed $k$.

% was pretty much answered by the negative:
%\daniel{
%Research questions: is sorting likely to become eventually so expensive that we need cheaper heuristics such as
%sorting by blocks without fusion?
%}

\section*{Acknowledgements}
This work is supported by NSERC grants 155967, 261437 and by  FQRNT 
grant 112381.  %We thank David Eppstein and Jean-Luc Delatre for an online discussion on the identical-row clustering problem.
\balance
\bibliographystyle{abbrv}
%\bibliography{../bib/lemur} -ofk inserted the bbl file here on 14 aug 2008

%\balance
\temporary{%
\section*{todo}
\begin{itemize}
\item Decide whether to use IEC units like GiB for RAM.
\item \sout{Do we write 10,000 or 10\,000?}
\item \owen{later pass needs to check consistency in other ``dimension'' uses to
see whether ``column'' is more appropriate.}
\item \sout{Do not forget to offer a link to our source code!}
\item \sout{Owen found out that for $k>1$, column reordering can behave quite differently
than for $k=1$, presumably because of the increased density.}
\item \sout{Why does the king james file gets larger as $k$ increases? Odd, odd.
  \owen{close to explanation}}
\item\sout{Document how we adapt the values of k per dimension by using the cardinalities.}
\cut{\item 
\daniel{To limit our scope, I propose to no longer use DBLP\@. We may also want
to drop other data sets. But we shall add the king james bible data set to compensate.}}
\end{itemize}
\daniel{Daniel reports an 8 page limit, July 4 submission date.}
}
\end{document}